\documentclass[11pt]{article}

\usepackage{multirow}
\usepackage{color}
\usepackage{amsmath, amsthm, amssymb, amsfonts}
\usepackage{fullpage}
\usepackage{xspace}
\usepackage{comment}
\usepackage{dsfont}

\usepackage{algorithm}
\usepackage{algpseudocode}
\usepackage{lineno}

\algtext*{EndWhile}
\algtext*{EndIf}
\algtext*{EndFor}

\usepackage{hyperref}
\hypersetup{colorlinks,allcolors=blue}


\ifx\theorem\undefined
\newtheorem{theorem}{Theorem}[section]
\fi
\ifx\example\undefined

\fi
\ifx\property\undefined

\fi
\ifx\lemma\undefined
\newtheorem{lemma}[theorem]{Lemma}
\fi
\ifx\proposition\undefined
\newtheorem{proposition}[theorem]{Proposition}
\fi
\ifx\remark\undefined
\newtheorem{remark}[theorem]{Remark}
\fi
\ifx\corollary\undefined

\fi
\ifx\definition\undefined
\newtheorem{definition}[theorem]{Definition}
\fi
\ifx\conjecture\undefined

\fi
\ifx\fact\undefined

\fi
\ifx\claim\undefined

\fi
\ifx\assump\undefined

\fi
\ifx\cond\undefined

\fi

\newcommand{\RR}{\mathbb{R}}
\newcommand{\cS}{\mathcal{S}}

\newcommand{\cF}{\mathcal{F}}

\newcommand{\PP}{\Pr}
\newcommand{\EE}{\mathbb{E}}
\newcommand{\NN}{\mathbb{N}}
\newcommand{\cE}{\mathcal{E}}
\newcommand{\cB}{\mathcal{B}}
\newcommand{\cH}{\mathcal{H}}
\DeclareMathOperator{\poly}{poly}
\DeclareMathOperator{\polylog}{polylog}
\newcommand{\cA}{\mathcal{A}}

\newcommand{\wt}[1]{\widetilde{#1}}
\newcommand{\wh}[1]{\widehat{#1}}
\newcommand{\wb}[1]{\overline{#1}}
\newcommand{\ex}{\EE}

\DeclareMathOperator{\var}{Var}

\DeclareMathOperator{\supp}{supp}

\DeclareMathOperator{\HHdim}{HHdim}
\DeclareMathOperator{\VCdim}{VCdim}

\renewcommand{\epsilon}{\varepsilon}

\newcommand{\zo}{\{0,1\}}
\newcommand{\eqdef}{:=}
\newcommand{\set}[1]{\left\{ #1 \right\}}
\newcommand{\tuple}[1]{{\langle{#1}\rangle}}
\providecommand{\card}[1]{\lvert#1\rvert}
\providecommand{\norm}[1]{\left\lVert#1\right\rVert}
\newcommand{\indic}{\mathds{1}}
\newcommand\indice[1]{\indic_{\{#1\}}}

\newcommand{\cSint}{\cS_{\mathrm{int}}}

\newtheorem{problem}{Problem}

\title{Universal Streaming of Subset Norms%
  \thanks{Part of this work was done while the authors were visiting the Simons Institute for the Theory of Computing.}
}

\author{
  Vladimir Braverman%
  \thanks{This material is based upon work supported in part by the National Science Foundation under
Grants No. 1447639, 1650041 and 1652257, Cisco faculty award, and by the ONR Award N00014-18-1-2364.
Email: \texttt{vova@cs.jhu.edu}
  }
  \\
  Johns Hopkins University
  \and
  Robert Krauthgamer%
  \thanks{Work partially supported by ONR Award N00014-18-1-2364,
    the Israel Science Foundation grant \#1086/18, 
    and a Minerva foundation grant from the Federal German Ministry for Education and Research. 
    Email: \texttt{robert.krauthgamer@weizmann.ac.il}
  }
  \\
  Weizmann Institute of Science
  \and
  Lin F. Yang%
  \thanks{Email: \texttt{linyang@ee.ucla.edu}
  }
  \\
University of California, Los Angeles
}

\date{\today}

\begin{document}
\maketitle

\begin{abstract}

Most known algorithms in the streaming model of computation aim to approximate a single function such as an $\ell_p$-norm.
In 2009, Nelson [\url{https://sublinear.info}, Open Problem 30]
asked if it is possible to design \emph{universal algorithms},
that simultaneously approximate multiple functions of the stream. 
In this paper we answer the question of Nelson for the class of
\emph{subset-$\ell_0$} in the insertion-only frequency-vector model.
Given a family of subsets $\cS\subset 2^{[n]}$, we provide a single streaming algorithm that can $(1\pm \epsilon)$-approximate the subset-norm
for every $S\in\cS$. Here, the subset-$\ell_p$ of $v\in \RR^n$ with respect to set $S\subseteq [n]$ is the $\ell_p$-norm of $v_{|S}$
(the vector $v$ restricted to $S$ by zeroing all other coordinates). 

Our main result is a near-tight characterization of the space complexity
of every family $\cS\subset 2^{[n]}$ of subset-$\ell_0$'s in insertion-only streams,
expressed in terms of the ``heavy-hitter dimension'' of $\cS$,
a new combinatorial quantity related to the VC-dimension of $\cS$.
We also show that the more general turnstile and sliding-window models 
require a much larger space usage.  
All these results easily extend to $\ell_1$-norm. 

In addition, we design algorithms for two other subset-$\ell_p$ variants.
These can be compared to the famous Priority Sampling algorithm 
of Duffield, Lund and Thorup [JACM 2007],   
which achieves additive approximation $\epsilon\norm{v}_1$ 
for all possible subsets ($\cS=2^{[n]}$)
in the entry-wise update model.  
One of our algorithms extends their algorithm to handle turnstile updates, 
and another one achieves multiplicative approximation given a family $\cS$.


\end{abstract}

\clearpage

\section{Introduction}

The streaming model of computation, where a space-bounded algorithm
makes only a single pass over an input stream,
has gained popularity for its theoretical significance
and usefulness in practice.
Researchers have designed efficient streaming algorithms for many fundamental problems, including, for example:
moments or norms of a frequency vector $v\in\RR^n$
formed by a stream of additive updates \cite{ams99, Indyk06, iw05};
clustering a stream of points in $\RR^d$ \cite{hm04};
and graph statistics for streams of edge updates \cite{fkmsz05}.

Most algorithms designed for this model solve only a single problem.
For instance, in the extensively studied area of streaming $\ell_p$ norms
of a frequency vector, an algorithm usually makes a pass over the stream,
and then it can use the summary it stores to compute
only one particular norm -- the one it was designed for.
Designing a new algorithm for each statistic can be impractical in some applications.
For example, in network monitoring, it is often desirable to maintain a single summary
of the observed traffic and use this summary for multiple tasks such as approximating the entropy, finding elephant flows and heavy hitters, and detecting DDOS attacks~\cite{liu16,Sekar:2010:RCM:1879141.1879186}.


The importance of multi-functional summaries has been observed in the theory community as well.
Nelson~\cite[Open Problem 30]{sublinear30} asked in 2009 if it is possible to design \emph{universal} algorithms for families of functions.
More formally, given a family $\cF$ of functions of the form $f:\RR^n\to\RR$,
the goal is to compute, in one pass over a stream representing $v\in\RR^n$,
a single summary that can be used to evaluate $f(v)$ for every function $f\in\cF$.
Several algorithms~\cite{bo10a, braverman_et_al:LIPIcs:2015:5324, braverman_et_al:LIPIcs:2015:5325, bbcky17, bcwy16}
provide universal sketches for some families of functions,
for example all symmetric norms in a certain class \cite{bbcky17}.
However, universal algorithms are an exception rather than the rule,
and Nelson's question is still open in any reasonable generality.

A simple systematic method to generate a family $\cF$
from a single function $f:\RR^n\to\RR$ 
is to apply this $f$ to different subsets of coordinates.
More precisely, for every subset $S\subset [n]$
define the function $f_{S}: v \mapsto f(v_{|S})$,
where $v_{|S}$ denotes zeroing out the coordinates of $v$ not in $S$.
In this way, every set system $\cS\subset 2^{[n]}$
describes a family of functions $\{f_S: S\in \cS\}$.
We focus on the basic case where $f$ is an $\ell_p$-norm,
and call such function families \emph{subset $\ell_p$-norms}.
As usual, we wish to approximate each function \emph{multiplicatively},
say within factor $1\pm \epsilon$ for a given $\epsilon \in (0,1)$;
this is clearly stronger than approximating additively by $\epsilon\norm{v}_p$. 

Subset $\ell_p$-norms arise naturally in applications;
for instance, $\cS$ could represent supported queries to a database.
Indeed, the well-known \textsc{Subset Sum} problem~\cite{adlt05, dlt07, s06}
and its variant called \textsc{Disaggregated Subset Sum}~\cite{CDKLT14,t18}, 
are equivalent to our subset $\ell_1$-norm problem
in the entry-wise and insertion-only models, respectively. 
Network-monitoring tasks, such as worm detection, rely on
subset $\ell_p$-norms to approximate \emph{flow} statistics~\cite{dlt07}.
Recall that a network flow is a subset of network traffic defined by a source address, a customer, an organization, an application or an arbitrary predicate. It is folklore that calculating flow volume is simply a subset-$\ell_1$ query, and the number of distinct packets in a flow is a subset-$\ell_0$ query.
Recent work~\cite{Sonata18,Narayana:2017} reiterated that approximating
these subset-$\ell_p$ queries and more general filters
is still an important open problem in network telemetry. 
In another recent example, Ting~\cite{t18} argued that
\textsc{Disaggregated Subset Sum} is widely applicable in ad prediction, 
where future user behaviour is inferred from historical aggregate queries
that have a form of subset-$\ell_1$. 
In both examples, data collection is challenging since future queries can be arbitrary, and thus it is critical to answer large classes of subset-$\ell_1$ queries.
We refer the reader to~\cite[Section 1.4.1]{dlt07} and~\cite[Section 2.2]{t18}
for detailed discussions of these and additional applications in machine learning~\cite{shrivastava2016time},
database query estimation~\cite{vengerov2015join}
and denial of service attacks~\cite{sekar2006lads}.

Our \emph{main contribution} to universal streaming is a \emph{near-tight}  characterization, {for every $\cS\subset 2^{[n]}$},
of the space complexity of subset-$\ell_0$'s in insertion-only streams.
We stress that this problem asks to count the distinct items 
(non-zero coordinates of $v$) inside every subset $S\in\cS$. 
Our characterization connects the space complexity for a set system $\cS$
to a combinatorial notion that we call the \emph{heavy-hitter dimension} of $\cS$, 
which counts the maximum possible number of coordinates in a single $v\in\RR^n$
that may be a ``heavy hitter'' for some $S\in\cS$
(see Definition~\ref{defn:HHdim} for full details). 
%
This notion is related to VC dimension by $\VCdim(\cS) \le \HHdim(\cS)$,
however the gap between the two is not bounded by any fixed factor. 
We in fact prove the above inequality 
and make use of it in Section~\ref{sec:SubsetLpForAll}. 
Throughout, $\wt{O}(\cdot)$ suppresses a $\polylog(n)$ factor,
and $O_\epsilon(\cdot)$ suppresses a factor depending only on $\epsilon$;
taken together, $\wt{O}_\epsilon(f)$
stands for $O(g(\epsilon)(\log^{O(1)} n))\cdot f$ for some function $g$.

\begin{theorem}[Informal Statement of Theorems~\ref{thm:alg-main} and \ref{thm:lower-bound-main}]
For every $\cS\subset 2^{[n]}$ and $\epsilon\in(0,1)$,
there is a randomized universal algorithm for insertion-only streams,
that makes one pass using $\wt{O}_\epsilon(\HHdim(\cS))$ words of storage,
and can then $(1+\epsilon)$-approximate each subset $\ell_0$-norm from $\cS$
with high probability.
Moreover, every such algorithm requires $\Omega(\HHdim(\cS))$ bits of storage.
\end{theorem}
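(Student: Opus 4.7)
The plan is to use a multi-level subsampling framework tailored to exploit the structure of $\cS$. Fix $L = O(\log n)$ levels. At level $\ell \in \{0,1,\ldots,L\}$, use a pairwise-independent hash $h_\ell : [n] \to [2^\ell]$ and run, on the substream of updates $(i,+1)$ with $h_\ell(i)=0$, an insertion-only heavy-hitters sketch that retains every coordinate whose frequency in the substream exceeds an $\epsilon/\polylog(n)$ fraction of the number of distinct surviving coordinates. To answer a query $S\in\cS$, locate the unique level $\ell^*$ at which $|\supp(v)\cap S|$ surviving the subsampling lies in $[\epsilon^{-2},2\epsilon^{-2}]$; at this level every surviving coordinate of $v_{|S}$ is heavy relative to the substream, so it is stored, and counting such coordinates and rescaling by $2^{\ell^*}$ yields a $(1\pm\epsilon)$-approximation of $\norm{v_{|S}}_0$. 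The crucial point is that at each level the \emph{only} coordinates we need to retain are those that are heavy for some $S\in\cS$ on the subsampled stream; by the defining property of $\HHdim(\cS)$ applied to the subsampled frequency vector, this count is at most $\HHdim(\cS)$ up to $\polylog(n)/\poly(\epsilon)$ factors, after a union bound over the $O(\log n)$ levels.

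\textbf{Lower bound.} For the $\Omega(\HHdim(\cS))$ bit bound I would reduce from one-way $\mathrm{INDEX}$ on $d = \HHdim(\cS)$ bits. Let $v^\ast \in \RR^n$ be an extremal frequency vector witnessing $\HHdim(\cS)=d$, together with distinct coordinates $i_1,\ldots,i_d$ and sets $S_1,\ldots,S_d \in \cS$ such that each $i_j$ is a heavy hitter for $S_j$ with respect to $v^\ast$. Given $x \in \zo^d$, Alice inserts the updates realizing $v^\ast$ except that she omits all copies of $i_j$ whenever $x_j = 0$, and then sends the state of the streaming algorithm to Bob. Bob queries the subset-$\ell_0$ of $S_j$ for the index $j$ of interest: because $i_j$ was a heavy hitter for $S_j$ the baseline $\norm{v^\ast_{|S_j}}_0$ is small relative to the heaviness threshold, so turning $i_j$ on or off changes $\norm{v_{|S_j}}_0$ by a $(1\pm\Omega(\epsilon))$-factor that a $(1+\epsilon')$-approximation (for suitable $\epsilon'$) can detect, allowing Bob to recover $x_j$. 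The one-way communication complexity of $\mathrm{INDEX}$ then gives the claimed $\Omega(d)$ bit lower bound.

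\textbf{Main obstacles.} The algorithmic side is the harder half. A generic heavy-hitters sketch at each level would spend $\Theta(\epsilon^{-2})$ space independent of $\cS$, so we need a filtering step whose heavy-hitter threshold is synchronized with the one in the definition of $\HHdim$; only then does the per-level storage collapse to $\wt{O}_\epsilon(\HHdim(\cS))$. A second subtle point is selecting the number of levels and boosting the heavy-hitter success probability so that, with high probability, the ``right'' level $\ell^*$ exists \emph{simultaneously} for every $S \in \cS$ that one might want to query; this uses the relation $\VCdim(\cS) \le \HHdim(\cS)$ mentioned above to control a union bound over a Sauer-Shelah-type covering of $\cS$. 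On the lower-bound side, the only delicate point is that $\HHdim(\cS)$ is a worst-case quantity over $v$, so Alice must plant $v^\ast$ as a fixed prefix of her stream before encoding $x$; this is benign in the insertion-only model since the prefix is deterministic and common to both parties.
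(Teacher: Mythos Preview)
Your lower-bound reduction has the right structure (one-way $\mathrm{INDEX}$ on $d=\HHdim(\cS)$ bits), but you have misread the definition of $H(\cS,v)$: a coordinate $i$ lies in $H(\cS,v)$ when some $s\in\cS$ satisfies $\supp(s\circ v)=\{i\}$, i.e.\ $i$ is the \emph{sole} nonzero of $s\circ v$. Hence $\|v^*_{|S_j}\|_0=1$, and toggling $i_j$ switches the query value between $0$ and $1$; any multiplicative approximation (for arbitrary factor $\alpha$, not just $1+\epsilon$) distinguishes these. This is exactly the paper's reduction, and there is no need for your ``$(1\pm\Omega(\epsilon))$-factor'' argument or for Alice to ``plant $v^*$ as a prefix''.

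The upper bound has a genuine gap. Your plan subsamples at $O(\log n)$ rates and then asserts that at each level the set of coordinates ``heavy for some $S\in\cS$'' has size at most $\HHdim(\cS)$ up to $\poly(1/\epsilon)$ factors, ``by the defining property of $\HHdim(\cS)$''. But the definition only bounds the number of coordinates that are \emph{singletons} for some $S$; what you need is: if every $i\in Z$ lies in some $S\in\cS$ with $|Z\cap S|\le U$ (here $U=\Theta(\epsilon^{-2})$), then $|Z|\le U\cdot\HHdim(\cS)$. This is the paper's Proposition~\ref{prop:subset}, proved by an inductive peeling argument (repeatedly strip off a maximal subset of $1$-needed coordinates, each such subset having size at most $\HHdim(\cS)$, and show the remainder satisfies the hypothesis with $U$ replaced by $U-1$). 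This lemma is the main technical content of the upper bound and does not follow from the definition alone.

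A related issue is your use of a ``heavy-hitters sketch'' at each level. For $\ell_0$ every surviving coordinate contributes exactly $1$ to the distinct count, so there is no frequency threshold to filter on; the paper does not sketch at all but \emph{explicitly stores} the sampled coordinates in a set $\cH$, actively deleting any $i\in\cH$ as soon as every $S\ni i$ already has $|S\cap\cH|>U$. It is precisely this deletion rule, combined with Proposition~\ref{prop:subset}, that keeps $|\cH|\le U\cdot\HHdim(\cS)$. Finally, your ``second obstacle'' (a Sauer--Shelah union bound over $\cS$) is not needed for the stated theorem, which only promises a per-query guarantee; the paper treats the for-all version separately and pays an extra $\HHdim(\cS)$ factor there.
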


To illustrate the scope of this result, we present in Table~\ref{tab:hhdim}
a few examples and properties of the heavy-hitter dimension
(their proofs appear in Section~\ref{sec:HHexamples}). 
One interesting example is the family of all large intervals in $[n]$,
namely, of size $\Omega(n)$, which has dimension $O(1)$.
A second one is a family of $\poly(n)$-many uniformly-random sets
(every set contains every index with probability $1/2$),
which has dimension $O(\log n)$ with high probability.
For both examples, our algorithm uses small space (polylogarithmic in $n$)
to achieve multiplicative $(1+\epsilon)$-approximation,  
which was not known before. 
(For intervals, additive approximation $\epsilon \norm{v}_1$
can be achieved by several known algorithms, including quantile estimates, 
range counting, and heavy-hitters over dyadic intervals.)

Another interesting example is a two-dimensional family
derived from the last property in the table, as follows. 
Suppose each index $i\in[n]$ is actually a pair $(i_1,i_2)\in [n_1]\times[n_2]$,
for instance the source and destination address of a packet. 
Let $\cS_1$ be the family of all large intervals with respect to $i_1$,%
\footnote{Strictly speaking, we cannot simply ignore $i_2$, 
  but we can order the $n=n_1 n_2$ pairs according to $i_1$ 
  and take all large intervals under this ordering.
  This argument exploits the fact that the heavy-hitter dimension
  is permutation-invariant, i.e., not affected by reordering the coordinates. 
}
and similarly $\cS_2$ with respect to $i_2$. 
Each of $\cS_1,\cS_2$ has dimension $O(1)$, and thus also their union-product, 
which implies that our algorithm will use small space (polylogarithmic in $n$)
can estimate distinct elements in subsets of form
$i_1\in[a,b] \vee i_2\in[c,d]$, 
for instance the logical-or between a range of source addresses 
and a range of destination addresses.

\begin{table}[htb!]
\centering
\begin{tabular}{ccc}
\hline\hline
Set System & $\HHdim$ & Description\\
\hline
$\set{ S_1,\ldots,S_k }$ & $\le k$ &  any sets (tight for disjoint sets)\\
$\set {S\subset[n] :\ \card{S} \ge n-k }$ & $k+1$ & sets missing few coordinates\\
$\set{[i..i']:\ i'-i+1 \ge k }$  & $\Theta(n/k)$ & intervals of size $\ge k$ \\
$\set{ s_1,\ldots,s_k: \mbox{ all } s_{i,j}\sim \cB(p) }$ & whp $O(\log(nk)/q)$ & $k$ random subsets of density $q$\\
\hline
$\set{ S'\cup S'':\ S',S'' \in \cS } $& $\HHdim(\cS)$ & self union of $\cS$ \\
$\cS_1\cup \cS_2 $& $\le \HHdim(\cS_1) + \HHdim(\cS_2)$ & sub-additivity\\
$\set{ S_1\cup S_2:\ S_1\in\cS_1, S_2\in \cS_2 } $ & $\le \HHdim(\cS_1) + \HHdim(\cS_2)$ & union-product of $\cS_1,\cS_2$\\
\hline
\end{tabular}
\caption{Simple examples and basic properties of heavy-hitter dimension over domain $[n]$. 
  \label{tab:hhdim}
}
\end{table}

Let us consider some natural extensions of the above theorem.
First, our algorithm extends to subset-$\ell_1$ as well,
as shown in 
Theorem~\ref{thm:alg-l1}. 
Second, it is stated for insertion-only streams,
however for the more general turnstile and sliding-window models
(i.e., streams with both insertions and deletions or streams where old items expire),
we show that the subset-$\ell_p$ problem, for any $p\ge 0$, 
requires space $\Omega(n)$ even if $\HHdim(\cS) = O(1)$.
This is striking because such a large separation between insertion-only and turnstile stream or sliding window algorithms is rare.
Indeed, it may be instructive to see why the smooth-histograms technique
of~\cite{BO07} fails in this case.

\begin{theorem}[Informal Statement of Theorems~\ref{thm:turnstile-lower-bound} and \ref{thm:sliding-lower-bound}]
There exists $\cS\subset 2^{[n]}$ with $\HHdim(\cS) = O(1)$,
such that every universal streaming algorithm
achieving multiplicative approximation for subset-$\ell_p$ for $\cS$
requires $\Omega(n)$ bits of space
in both the turnstile and sliding-window models.
\end{theorem}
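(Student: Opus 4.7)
The plan is to exhibit a single family $\cS$ with $\HHdim(\cS)=O(1)$ and reduce from the augmented indexing problem (AugIndex), which has one-way communication complexity $\Omega(n)$. I would take $\cS=\{[1,j]:j\ge n/2\}$, the family of prefixes of length at least $n/2$; this is a sub-family of intervals of length $\ge n/2$ and so has $\HHdim(\cS)=O(1)$ by the table. Recall that in AugIndex, Alice holds $x\in\zo^{n/2}$, Bob holds an index $k\in[n/2]$ together with $(x_{k+1},\ldots,x_{n/2})$, and Bob must output $x_k$.

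For the turnstile lower bound with $p\ge 1$, I would use an ``exponential weight amplification'' argument. Alice simulates a turnstile stream of $n/2$ weighted updates, issuing $(n/2+i,\ x_iM^i)$ for each $i\in[n/2]$, where $M>1$ is a constant chosen with $M^p$ larger than $\poly(1/(1-\epsilon))$. She then transmits the algorithm's internal state to Bob. Using his suffix, Bob issues the canceling updates $(n/2+i,\ -x_iM^i)$ for each $i>k$, leaving the vector $v_{n/2+i}=x_iM^i$ for $i\le k$ and zero elsewhere. Bob queries $\|v|_{[1,n/2+k]}\|_p$, whose $p$-th power equals $\sum_{i\le k:\,x_i=1}M^{ip}$; this is at least $M^{kp}$ when $x_k=1$ and at most $M^{kp}/(M^p-1)$ when $x_k=0$. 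By the choice of $M$, these two ranges are separated by more than the multiplicative slack of the $(1+\epsilon)$-approximation, so Bob recovers $x_k$. Hence the transmitted state is $\Omega(n)$ bits. For the sliding-window lower bound I would keep the same encoding but exploit window expiration in place of explicit deletions: Alice submits her updates in reverse order at $n/2$ successive time steps with window size $W=n/2$, and Bob appends $L=n/2-k$ dummy updates on a fixed coordinate that lies outside the active range, which expires precisely the updates for $i>k$. The identical dominance calculation on the residual vector recovers $x_k$, and the state after Alice's stream must carry $\Omega(n)$ bits.

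The step I expect to be the main obstacle is the case $p=0$, because the weight-amplification argument collapses: flipping a single bit of $x$ changes $|\supp(v)\cap[1,n/2+k]|$ by at most one, which a constant-factor approximation cannot detect against a cumulative base of $\Theta(k)$. To handle $p=0$ I would replace $\cS$ by a different small-$\HHdim$ family — for instance, using the union-product inequality in the table to combine large intervals with a constant number of complement-of-singleton sets, and arranging the turnstile/sliding-window encoding so that a single-bit change of $x_k$ flips a distinguished support coordinate isolated by some query set, yielding a constant-factor change in that query's $\ell_0$. Verifying that a single family achieves both $\HHdim(\cS)=O(1)$ and the desired $\Omega(n)$ distinguishability for $\ell_0$ (where no multiplicative amplification is available through the values of $v$) is the most delicate point of the proof.
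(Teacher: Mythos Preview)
Your reduction from Augmented Index is the right idea, and your prefix family indeed has $\HHdim(\cS)=1$. But the weight-amplification route creates two real problems. First, the updates $x_iM^i$ have magnitude exponential in $n$, which violates the standard turnstile model where entries (and stream length) are $\poly(n)$-bounded; a lower bound proved only under exponentially large updates says little about the usual model. Second, and more fundamentally, the $p=0$ case is not a corner case to patch later: your prefix family can never isolate a single coordinate, so for $\ell_0$ (and really for any $p$ without the illegal exponential weights) a single bit-flip changes the query value by $1$ against a baseline of $\Theta(k)$, which no multiplicative approximation detects. Your proposed fix of adjoining ``a constant number of complement-of-singleton sets'' cannot help, since a constant number of extra sets can isolate only a constant number of coordinates, not the $\Omega(n)$ needed for AugIndex.

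The paper avoids all of this by choosing a different family: the \emph{sliding} intervals $\cS_{\mathrm{int}}=\{[a,a+n/2]:1\le a\le n/2\}$, which still has $\HHdim=O(1)$. Alice places her bits $x_1,\ldots,x_{n/2}$ at coordinates $1,\ldots,n/2$ (with unit weights). Bob cancels coordinates $j+1,\ldots,n/2$, leaving $v$ supported in $\{1,\ldots,j\}$, and then queries $s_j=[j,\,j+n/2]$. The crucial point is that $s_j\cap\{1,\ldots,j\}=\{j\}$, so $\|v\circ s_j\|_p=x_j\in\{0,1\}$ for every $p\ge 0$. Any multiplicative approximation distinguishes $0$ from a positive value, so $x_j$ is recovered directly --- no weights, no case split on $p$. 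The sliding-window argument is analogous (feed Alice's bits in reverse time order and let Bob expire the oldest ones). The lesson is that the right choice of $\cS$ lets the query set \emph{isolate a single coordinate} after cancellation; prefixes cannot do this, but translates of a half-interval can.
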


\paragraph{Variants of the problem.}
Duffield, Lund and Thorup~\cite{dlt07} consider a similar problem in the  entry-wise update model,
in which each entry of the vector appears in the stream at most once.
Their ``subset-sum'' problem is equivalent to our subset-$\ell_1$ problem
if all entries of the vector are non-negative.%
\footnote{The non-negativity is a mild assumption
  in this entry-wise update model,
  because one can easily separate the positive and negative entries
  and execute in parallel two algorithms.
}
They devise a Priority Sampling algorithm that approximates
the subset-$\ell_1$ of every subset $S\subset [n]$,
achieving in fact an optimal space usage for this model \cite{s06}.
However, their result actually guarantees an \emph{additive approximation},
i.e., the error for every subset-$\ell_1$ query
is proportional to the $\ell_1$-norm of the entire vector.%
\footnote{Indeed, Theorem~1 of~\cite{s06} bounds the variance of the estimator by $\|v\|_1^2/(k-1)$, where $k$ is number of samples being stored.
  This implies that with high probability, the estimator's additive error
  is at most $O(\|v\|_1/\sqrt{k-1})$.
}
We additionally provide two extensions to their results in new directions.

The first extension is a full characterization of space complexity of \emph{multiplicative approximation} of subset-$\ell_p$'s
in the entry-wise update model.
In contrast to the results of \cite{dlt07,s06},
once a multiplicative approximation is required,
the space complexity depends on the query set system $\cS$.
Indeed, by modifying the priority sampling algorithm of~\cite{dlt07,s06}
and employing our lower bound,
we show (in Theorem~\ref{thm:entry-wise-model}) that the space complexity
is now precisely $\wt{\Theta}(\HHdim(\cS))$.

Our second extension achieves the same \emph{additive approximation},
but in the more general \emph{turnstile model}
(i.e., additive updates to entries).
Similarly to the algorithm of~\cite{dlt07,s06},
our algorithm for the subset-$\ell_p$ problem
achieves additive error $\epsilon \|v\|_p$
with space complexity that does not depend on the query set system $\cS$.
This result is summarized in the following theorem;
we note that a matching lower bound follows immediately from known results,
as the case $S=[n]$ is the usual approximation of $\ell_p$ norm.

\begin{theorem}[Informal Statement of Theorem~\ref{thm:additive}]
There exists a one-pass streaming algorithm that,
given a stream of additive updates to a vector $v\in \RR^n$,
uses only $\wt{O}(1)$ words of space for $0\le p\le 2$
(and $\wt{O}_{\epsilon}(n^{1-2/p})$ words for $p>2$),
and can then approximate $\|v_{|S}\|_{p}$ within additive error
$\epsilon \|v\|_p$ for each $S\subset [n]$ with high probability.
\end{theorem}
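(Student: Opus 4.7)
The plan is to split on the value of $p$ and, in each regime, combine two standard streaming primitives: an $\ell_p$-norm sketch that produces $\wh\Phi = (1 \pm O(\epsilon))\|v\|_p^p$, and a mechanism that locates where the $\ell_p$-mass of $v$ sits, in an $S$-oblivious fashion so the same sketch can answer any subsequent query. For $0 \le p \le 2$, my plan is to maintain $k = \wt O_\epsilon(1)$ independent $\ell_p$-samplers alongside $\wh\Phi$, each sampler producing a coordinate $i_j$ with $\Pr[i_j = i] = (1 \pm \tfrac{1}{4}) |v_i|^p / \|v\|_p^p$; both primitives exist in the turnstile model with $\polylog(n)\cdot \poly(1/\epsilon)$ words (via $p$-stable sketches and the Jowhari--Saglam--Tardos sampler, with $\ell_0$ handled analogously). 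On query $S$, output
\[
\wh Z_S \eqdef \left(\wh\Phi \cdot \tfrac{1}{k}\sum_{j=1}^k \indic\{i_j \in S\}\right)^{1/p}.
\]

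The analysis for $p \le 2$ proceeds by noting that each indicator $\indic\{i_j \in S\}$ is Bernoulli with mean $\mu \eqdef \|v_{|S}\|_p^p / \|v\|_p^p \in [0,1]$; Hoeffding gives concentration of the average around $\mu$ within additive $\epsilon^p$ with probability $1 - e^{-\Omega(k\epsilon^{2p})}$, so taking $k = \wt O(\epsilon^{-2p})$ samples and combining with the multiplicative error of $\wh\Phi$ yields $|\wh Z_S^p - \|v_{|S}\|_p^p| \le O(\epsilon^p)\|v\|_p^p$. Taking $p$-th roots (using subadditivity of $x \mapsto x^{1/p}$ for $p \ge 1$, and its convex analog $(a+b)^{1/p} \le 2^{1/p-1}(a^{1/p}+b^{1/p})$ for $p < 1$) converts this to additive error $O_p(\epsilon)\|v\|_p$, which can be absorbed by rescaling $\epsilon$.

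For $p > 2$, my plan is to follow the Indyk--Woodruff subsampling-plus-heavy-hitters framework: I would subsample the domain $[n]$ at dyadic rates $2^{-\ell}$ for $\ell = 0,1,\ldots,\log n$, and at each rate run a CountSketch tuned to return the $\ell_2$-heavy hitters of the subsampled vector, together with additive approximations of their values. Each level uses $\wt O(n^{1-2/p}/\epsilon^{O(1)})$ words, and every coordinate $i$ of magnitude $|v_i|$ is picked up at (roughly) one canonical resolution level. On query $S$, the estimator sums $|v_i|^p$ over recovered coordinates $i \in S$, using each coordinate only at its canonical level. Since the sampling and sketching are $S$-oblivious, the standard IW analysis bounds the estimator within additive $\epsilon^p \|v\|_p^p$ of $\|v_{|S}\|_p^p$, and taking the $p$-th root gives the desired additive $\epsilon \|v\|_p$ guarantee.

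The main obstacle, as I see it, is the $p > 2$ case: one has to verify that the IW error bound, which is normally stated for estimating the full $\|v\|_p^p$, transfers to per-subset estimation. I expect the key observation is that the expected squared error of the IW estimator decomposes as a sum over coordinates $i$ (times $\indic\{i \in S\}$ after restriction to $S$), so the contribution of any fixed $S$ is dominated by the full-domain bound. A secondary but routine obstacle is the $p$-th root conversion for $p < 1$ described above, and the fact that $\ell_p$-samplers only return coordinates with mildly biased probabilities — both introduce only constant-factor or $(1\pm O(\epsilon))$ slack that can be folded into $\epsilon$.
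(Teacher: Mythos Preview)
Your approach is sound in outline but differs substantially from the paper's. The paper uses one mechanism for all $p>0$: replicate $v$ into $k=\Theta(\epsilon^{-2})$ rows, scale entry $(i,j)$ by a pairwise-independent $p$-inverse random variable $X_{i,j}$ (so $\Pr[X>x]=x^{-p}$) to form $V\in\RR^{kn}$, and maintain a single CountSketch of $V$. On query $s$ the estimate is $2^{-1/p}$ times the $(k/2)$-th largest entry of $\wh V|_s$. The point of working with an order statistic is that it directly estimates $\|v\circ s\|_p$ on the norm scale (Lemma~\ref{lemma:inverse p-norm}), so no $p$-th root is ever taken; the CountSketch pointwise error is controlled by the $\ell_2$-tail of the \emph{full} vector $V$ (Lemma~\ref{lemma:tail}), which is what produces the $n^{1-2/p}$ factor for $p>2$ and only $\polylog$ factors for $p\le 2$. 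This sidesteps both obstacles you flag: there is no per-subset reanalysis of an IW-type estimator, and there is no root-extraction step. What your route buys is modularity (black-box samplers and IW); what the paper's buys is uniformity across $p$ and a cleaner error accounting.

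Your sampler route for $p\le 2$ is a legitimate alternative, but the $p$-th-root conversion for $p<1$ is wrong as written. Applying $(a+b)^{1/p}\le 2^{1/p-1}(a^{1/p}+b^{1/p})$ with $a=\|v_{|S}\|_p^p$ and $b=O(\epsilon^p)\|v\|_p^p$ yields $\wh Z_S \le 2^{1/p-1}\|v_{|S}\|_p + O_p(\epsilon)\|v\|_p$, which carries a constant multiplicative blow-up $2^{1/p-1}$ on the main term rather than an additive error. The correct argument is the mean value theorem for $x\mapsto x^{1/p}$ on $[0,2]$, whose derivative there is at most $(1/p)2^{1/p-1}$; this gives $|\wh Z_S-\|v_{|S}\|_p|\le O_p(\delta)\,\|v\|_p$ where $\delta$ is the additive error in the empirical fraction. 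Consequently you need $\delta=O_p(\epsilon)$, i.e.\ $k=\Theta_p(\epsilon^{-2})$ samples, not $\Theta(\epsilon^{-2p})$ (which for $p<1$ is too few). With that fix your $p\le 2$ argument goes through.
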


\paragraph{Summary.}
Our results are summarized in Table~\ref{tab:sum-res}, 
where the first row lists the main results.

\begin{table}[htb!]
	\renewcommand{\arraystretch}{1.2}%
	\centering
	\begin{tabular*}{\textwidth}{c @{\extracolsep{\fill}}cccc}
		\hline\hline
		Problem & Update Model & Approximation & Space & Theorems\\
		\hline
		subset-$\ell_0$ or $\ell_1$& insertion-only & multiplicative &  		$\wt{\Theta}(\HHdim(\cS))$  &\ref{thm:alg-main}, \ref{thm:lower-bound-main}, \ref{thm:alg-l1}\\ 
		\hline
		\multirow{3}{*}{subset-$\ell_p$}& turnstile & \multirow{ 3}{*}{multiplicative} &  $\Omega(n)$  & \ref{thm:turnstile-lower-bound}\\ 
		& sliding-window &  &  $\Omega(n)$  & \ref{thm:sliding-lower-bound}\\ 
		& entry-wise &  &  $\wt{\Theta}(\HHdim(\cS))$ & \ref{thm:entry-wise-model}\\
		\hline
		subset-$\ell_p$ & turnstile& additive & \begin{tabular}{lc}
		$\wt{\Theta}(1)$ & for $0\le p\le 2$ \\
		$\wt{\Theta}(n^{1-2/p})$&  for $p>2$
		\end{tabular} & \ref{thm:additive}\\
		\hline
	\end{tabular*}
	\caption{Summary of our results 
          \label{tab:sum-res}}
\end{table}

\subsection{Related Work}
\label{sec:related}
There is a large body of works that deals with approximating functions of a vector, i.e., norms and heavy hitters, in the streaming model of computation.
For instance \cite{ams99} is the first paper that systematically studies the $\ell_p$ norm approximation of a streaming vector;
\cite{Indyk06} gives the first near-optimal algorithm (in terms of $n$) for $\ell_p$ for all $0\le p\le 2$;
\cite{iw05} gives the first near-optimal algorithm for $\ell_p$ for all $p>2$;
\cite{iw03, w04, cks03, bjks04} give tight lower bounds on this problem with respect to the approximation parameter $\epsilon$ and dimension $n$.
There is a sequence of papers gradually improving the space complexity with respect to other parameters and studying variants of the problem.
Due to the lack of space we mention only a small subset of relevant papers 
\cite{CCF04, cks03, bgks06, knw10a, knw10b, knpw11, ako11, andoni2017high, bksv14, bcinww16,bcwy16};
and references therein. 
Most of these papers design methods that approximate a single function such as an $\ell_p$ norm for a fixed $p$.

Our setting is also related to the ``subset sum'' problem \cite{adlt05, dlt07, s06, t18} where one is interested in approximating 
a sum of the entries of a vector indexed by a subset.
It is not difficult to see that our problem is the same as the objective in \cite{dlt07} when the input is restricted to non-negative vectors; indeed the subset-sum problem is equivalent to the subset-$\ell_1$ problem. However, the model in \cite{dlt07} is slightly different.
In \cite{dlt07} the algorithm sees each coordinate of the frequency vector at most once. In this paper we consider the additive updates streaming model that allows incremental updates to the coordinates of the frequency vector. Thus, our model generalizes the model in \cite{dlt07}.
In addition, the algorithms in \cite{dlt07} solve the subset-sum problem but with an additive error.




\subsection{Preliminaries}
\label{sec:prelims}
We identify a binary vector $s\in\{0,1\}^n$ as a subset in $[n]$.
For two vectors $u, v\in \RR^{n}$, we denote $u\circ v\in \RR^n$ as the Hadamard product, i.e., each $(u\circ v)_i = u_i v_i$.
We denote the support of $v$, $\supp(v)\subset [n]$, as the set of non-zero coordinates in $v$, i.e., $\supp(v) =\{i\in [n]: v_i\neq 0\}$.
For each $p>0$, we denote the $\ell_p$ norm of a vector $v\in \RR^n$ as $\|v\|_p = (\sum_{i\in [n]}|v_i|^p)^{1/p}$.
For $p=0$, $\|v\|_0\eqdef|\supp(v)|$ is the size of the support of $v$.
For $p=\infty$, $\|v\|_\infty\eqdef\max_{i\in [n]}|v_i|$.
Note that for $p< 1$, $\ell_p$ is not a ``norm'' but were called a norm by convention.

In this paper we are focusing on the updates of a vector $v\in \RR^n$.
In the \emph{insertion-only} model, the input is a stream $\tuple{a_1, \ldots, a_m}$, 
where each item $a_j\in [n]$ represents an increment to coordinate $a_j$ 
of a vector $v\in\RR^n$, which is initialized to all zeros. 
Thus the accumulated vector is $v=\sum_{j=1}^m e_{a_j}$, 
where $\set{e_i:i\in[n]}$ is the standard basis. 
Here $m$ is usually assumed to be upper bounded by $\poly(n)$.
In the \emph{turnstile} model, the input is a stream $\tuple{(a_1, \Delta_1), \ldots, (a_m, \Delta_m)}$, where each item $(a_j, \Delta_j)\in [n]\times\{-1, 1\}$
represents an increment to coordinate $a_j$ 
of a vector $v\in\RR^n$ by $\Delta_j$.%
\footnote{A more general model allows $\Delta_i \in\{-M, \ldots, M\}$
for some $M=\poly(n)$. The space/time usage of the $M=1$ case is only up to $O(\log(n))$ factor worse. We use $M=1$ for sake of representation.
}
Thus the accumulated vector is $v=\sum_{j=1}^m \Delta_j\cdot e_{a_j}$.

We are interested in the following problem.
\begin{problem}[Subset-$\ell_p$]
	Let $\alpha\ge 1$ be a parameter. 
	Given a set of binary vectors $\cS\subset \{0,1\}^n$, design a one-pass algorithm over a stream of updates to vector $v\in\RR^n$, such that, at the end of the stream, the algorithm outputs a function $E:\cS\rightarrow \RR$ that satisfies,  
	\[
	\forall s\in \cS, \qquad \Pr[\|v\circ s\|_p\le E(s) \le \alpha\|v\circ s\|_p] \ge 0.9.
	\]
	We call this problem the \emph{$\alpha$-approximation subset-$\ell_p$} problem w.r.t. $\cS$.
\end{problem}
Note that the above definition requires the algorithm to approximate, for each given $s\in \cS$, the $\|v\circ s\|_p$ well.
A standard parallel repeating argument can lead to the ``for-all'' guarantee, i.e., the algorithm succeeds on approximating $\|v\circ s\|_p$ for all $s\in \cS$.
However, we pay an additional $\log|\cS|$ factor in the space -- this can be linear in $n$ if $|\cS|$ is large.
It would be interesting if one can design a for-all algorithm with space not depending $\log|\cS|$.

Note that the set system $\cS$ is given to the algorithm via a read-only tape, hence the space of storing $\cS$ is not counted.
A variant of this problem is the additive approximation problem.

\begin{problem}[Additive Subset-$\ell_p$]
	For a set of binary vectors $\cS\subset \{0,1\}^n$, design a one-pass algorithm over a stream of updates to some underlying vector $v\in\RR^n$ such that after one pass over the stream, the algorithm outputs a function $E:\cS\rightarrow \RR$ satisfies, 
	\[
	\forall s\in \cS, \quad \Pr[\|v\circ s\|_p - E(s)\big|\le 
	\epsilon \|v\|_p]\ge 0.9.
	\]
	We call this problem the \emph{$\epsilon$-additive-approximation subset-$\ell_p$} problem.
\end{problem}


\subsection{Technical Overview}
\label{sec:techniques}
\paragraph{Multiplicative Subset-$\ell_{p}$ Algorithm for $p\in\{0,1\}$. }

Estimating $\ell_0$ of a stream is a well-studied problem, 
for instance the first streaming algorithm was given in~\cite{fm85},
and the problem's space complexity was settled in~\cite{knw10a}. 
Let us first recall a classical sample-and-estimate technique for this problem
(see e.g.~\cite{bjkst02}).
Here, the algorithm subsamples each coordinate of $v$ with some probability $p$,
and then uses the sampled non-zero coordinates to estimate $\|v\|_0$
(simply count their number and divide by $p$). 
Suppose we could guess the correct rate $p$, 
such that number of non-zero samples is about $\Theta(\epsilon^{-2})$; 
then we would obtain a good estimate to the $\ell_0$ of the stream,
i.e., a $(1\pm \epsilon)$-approximation with constant probability. 
The number of guesses is at most $\Theta(\log n)$, since $\|v\|_0\le n$,
and we can try all of them in parallel. 
Observe that this algorithm actually stores all distinct samples up to a point
-- when the samples for a guess $p$ exceeds the $O(\epsilon^{-2})$ space bound,
the algorithm starts rejecting any extra samples. 

Consider now approximating $\|v\circ s\|_0$ for any $s$
in a known set system $\cS\subset 2^{[n]}$. 
To use the above sample-and-estimate technique, 
the guess $p$ should be chosen according to $\|v\circ s\|_0$.
However, an algorithm that is not tailored to $s$
will store (distinct) samples from all $\supp(v)$,
and thus it might reach its $O(\epsilon^{-2})$ space bound and start rejecting samples,
without storing enough samples from $\supp(v\circ s) \subseteq \supp(v)$.
The challenge is thus to store enough samples from $\supp(v\circ s)$
for every $s\in\cS$. 
Our idea is to rely on the structure of the set system $\cS$,
and store every sample that might be necessary for any $s\in\cS$,
which clearly maintains the correctness (accuracy guarantee).
However, this might require large space, perhaps even linear in $\card{\cS}$,
and our solution is to \emph{actively delete} samples that are not necessary. 

To formalize this idea, we let the algorithm store a set $\cH \subset [n]$
of (distinct) samples from the stream.
Now whenever the number of samples from some $s\circ v$
is smaller than our $O(\epsilon^{-2})$ bound, all these samples are stored,
and then $\cH$ can always be used to estimate $\|s\circ v\|_0$. 
However, when some $i\in\cH$ is no longer necessary for any $s\in\cS$ 
(which might happen as new samples are stored), 
the algorithm deletes this $i$ from $\cH$. 
The question is then: what is the maximum possible size of $\cH$? 
Luckily, we can show that $|\cH|= O[ \epsilon^{-2}\cdot \HHdim(\cS)]$
via an inductive argument,
whose base case is precisely the heavy-hitter dimension. 
Maintaining $\cH$ in a streaming fashion is straightforward 
and requires only $O[\epsilon^{-2}\cdot \HHdim(\cS)]$ words of space.
Recalling there are $O(\log n)$ guesses for $p$,
the algorithm actually stores $O(\log n)$ sets of samples, 
which altogether can simulate the sample-and-estimate algorithm
for any $s\in\cS$ given at the query phase, 
to achieve a multiplicative approximation of $v\circ s$. 
This result is presented in Theorem~\ref{thm:alg-main}.

In insertion-only streams, the $\ell_1$ norm is just the sum of each coordinate.
We can thus reduce the $\ell_1$ estimation problem to a new vector space of dimension $nm$, where $m$ is the length of the stream. 
We show that the converted set system has exactly the same heavy-hitter dimension, yielding again an algorithm with space usage $O[\epsilon^{-2}\cdot \HHdim(\cS)]$.
This result is presented in Theorem~\ref{thm:alg-l1}.

The upper bound for subset-$\ell_p$ norm in the entry-wise update model follows similar ideas to store a small subset that is important to the set system.
The only difference is that we use the priority sampling technique \cite{dlt07,s06} as the bottom-level algorithm.
This result is presented in Theorem~\ref{thm:entry-wise-model}.

\paragraph{Lower Bound for Subset-$\ell_p$.}
Our lower bound is via reduction from the INDEX problem. 
Suppose we have a set system with heavy hitter dimension $\HHdim(\cS)$, we can then find a vector $v$ with $\HHdim(\cS)$ non-zero coordinates and for each coordinate $v_i$, there exists an $s^{i}\in S$ such that $\{i\}=\supp(s^{i}\circ v)$.
Therefore, we can encode an INDEX instance into the non-zero coordinates of $v$ and by approximating $\|v\circ s^{i}\|_{p}$ multiplicatively for any $p$, we can have a protocol for the INDEX problem. 
This implies an $\Omega(\HHdim(\cS))$ lower bound.
This result is presented in Theorem~\ref{thm:lower-bound-main}.

\paragraph{Strong Lower Bound in the Turnstile Model and Sliding Window Model.}
It is striking that in the turnstile model or sliding window model, there does not exists sub-linear one-pass  multiplicative approximation subset-$\ell_p$ algorithms even for a very simple set system. 
We show that for a simple set system, e.g, a set system contains all the intervals with size $n/2$, which has heavy hitter dimension $O(1)$, any multiplicative approximation of subset-$\ell_p$ for any $p\ge 0$ requires $\Omega(n)$ space.
We show this via a reduction from the Augmented INDEX problem. 
In this problem, Alice has a binary vector $x\in \RR^n$ Bob has an index $j\in [n]$ and $x_{j+1}, x_{j+2}, \ldots, x_{n}$.
Alice sends one round of message to Bob and Bob needs to determine what is $x_j$.
It has been shown in \cite{BJKK04b} that, any constant-probability success protocol for this problem requires $\Omega(n)$ bits of space.
We construct a protocol using the subset-$\ell_p$ algorithm.
Alice simply maps each of its coordinates of $x$ to some stream updates. 
Bob removes all $x_{j'}$ for $j'> j$.
Bob then picks the interval that contains at most one non-zero coordinate -- $x_j$ -- and asks the algorithm to compute the $\ell_p$ norm.
Hence any multiplicative approximations can be used to decide whether $x_j$ is $0$.
Thus, any algorithm in the turnstile model requires $\Omega(n)$ space for this simple set system.
Similar lower bounds can be shown for the sliding window model with a lower bound of $\Omega(\min(W, n))$, where $W$ is the window size.
These results are formally presented in Theorem~\ref{thm:turnstile-lower-bound} and Theorem~\ref{thm:sliding-lower-bound}.

\paragraph{Additive Approximation. }
Our additive approximation to the subset-$\ell_p$ norm follows a similar flavor of the priority sampling algorithm \cite{dlt07, s06}.
We use the algorithmic idea appeared in \cite{bvwy18} (similar ideas also appear earlier in \cite{ako11} and \cite{andoni2017high}, but of different form).
To approximate the $\ell_p$ norm of a vector, we first generate $n$ pseudo-randomized random numbers to scale each entry of the input vector $v$, which can be implemented using $\Theta(\log n)$ space in the streaming setting.
If the distribution of the random numbers has a nice tail, e.g, $\Pr[X>x] = 1/x^p$, the $\ell_2$-heavy hitter of the scaled vector can be shown to be a good estimation of the $\ell_p$ norm.
The scaling is ``oblivious'' to the subset, i.e., for each $s\in 2^{[n]}$, the $\ell_2$-heavy hitter of $s\circ v'$ is a good estimator to $\|s\circ v\|_p$, where $v'$ is the scaled version $v'$. 
This result is presented in Theorem~\ref{thm:additive}.


\section{The Streaming Complexity of Subset-$\ell_p$}
\label{sec:SubsetLp}

In this section we study algorithms for the Subset-$\ell_p$ problem
(namely, achieve multiplicative approximation) for $p=0,1$. 
Our main finding is that the space complexity in insertion-only streams
is characterized by the following combinatorial quantity.

\begin{definition}[Heavy-Hitter Dimension]
  \label{defn:HHdim}
For a set system $\cS\subset 2^{[n]}$ and a vector $v\in \RR^n$, we denote the 
$H(\cS,v)$ as the set of heavy hitters induced by $\cS$:
\[
H(\cS,v) = \big\{i\in[n]:\ \exists s\in\cS \mbox{ s.t. } \supp(s\circ v) = \{i\}\big\}.
\]
and define the \emph{heavy-hitter dimension} of $\cS\subset\zo^n$ as 
\[
  \HHdim(\cS)\eqdef \sup_{v\in\RR^n} |H(\cS, v)|. 
\]
\end{definition}

Our main result is a streaming algorithm 
with space complexity that is linear in the heavy-hitter dimension, 
i.e., $\wt{O}_\epsilon(\HHdim(\cS))$,
see Theorem~\ref{thm:alg-main} in Section~\ref{sec:SubsetLpALg}.
We then provide several complementary results.
From the direction of space lower bounds, 
we prove a linear lower bound $\Omega(\HHdim(\cS))$,
which matches our algorithm above (in Section~\ref{sec:SubsetLpLB}),
and also a much bigger bound for turnstile and sliding-window streams, 
which separates these richer models from insertion-only streams
(in Section~\ref{sec:SubsetLpTurnstile}).
From the direction of applications of our algorithmic techniques, 
we extend our algorithm to the ``for-all'' guarantee
(in Section~\ref{sec:SubsetLpForAll}),
and to the case $p=1$
(in Section~\ref{sec:SubsetLpL1}), 
and we also design a variant for the more restricted model of 
entry-wise updates
(in Section~\ref{sec:SubsetLpEntrywise}).

\subsection{Examples and Properties of Heavy-Hitter Dimension}
\label{sec:HHexamples}

We present a few simple examples and basic properties of
the heavy-hitter dimension that may be useful in applications,
essentially proving the bounds shown in Table~\ref{tab:hhdim}. 

We begin with an alternative description of $\HHdim(\cS)$
where we view the set system $\cS\subset 2^{[n]}$ as an \emph{incidence matrix},
i.e., a $0-1$ matrix describing the incidence between sets $S\in\cS$
and coordinates $i\in [n]$.
Recall that a matrix $M\in\zo^{k\times k}$ is called a \emph{permutation matrix}
if every row and every column contain a single non-zero, i.e., exactly one $1$.
Clearly, up to reordering the rows and/or columns,  
such a matrix can be viewed as an identity matrix.

\begin{lemma}[Permutation Submatrix]
  \label{lem:HHdimPermuation}
Let $\cS\subset 2^{[n]}$,
and let $M\in\zo^{|\cS|\times n}$ be its incidence matrix. 
Then $\HHdim(\cS)$ is exactly the maximum size (number of rows/columns)
in a submatrix of $M$ that is a permutation matrix. 
\end{lemma}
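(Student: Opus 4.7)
\medskip

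The plan is to establish the equality via two matching inequalities that essentially translate the vector-based definition of $\HHdim(\cS)$ into the matrix language of the lemma. Let $k^*$ denote the maximum size of a permutation submatrix of the incidence matrix $M$.

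First I would show $\HHdim(\cS) \ge k^*$ by an explicit construction. Suppose $M$ has a $k \times k$ permutation submatrix supported on rows $s^{(1)},\dots,s^{(k)}\in\cS$ and columns $i_1,\dots,i_k\in[n]$; after reordering we may assume $s^{(a)}_{i_b} = \indic_{\{a=b\}}$. Define a witness vector $v\in\RR^n$ by setting $v_{i_a}=1$ for each $a\in[k]$ and $v_j=0$ otherwise. Then $(s^{(a)}\circ v)_{i_a} = 1$, while for every other coordinate either $v_j=0$ (if $j\notin\{i_1,\dots,i_k\}$) or $s^{(a)}_j=s^{(a)}_{i_b}=0$ for $b\ne a$. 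Hence $\supp(s^{(a)}\circ v) = \{i_a\}$, so $\{i_1,\dots,i_k\}\subseteq H(\cS,v)$ and $\HHdim(\cS)\ge k$.

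For the reverse inequality $\HHdim(\cS)\le k^*$, I would take any $v\in\RR^n$ and let $\{i_1,\dots,i_k\}\eqdef H(\cS,v)$. By definition, for each $a\in[k]$ there is some $s^{(a)}\in\cS$ with $\supp(s^{(a)}\circ v)=\{i_a\}$. This gives three facts: $v_{i_a}\ne 0$ (so every $i_b$ lies in $\supp(v)$), $s^{(a)}_{i_a}=1$, and $s^{(a)}_j=0$ for every $j\in\supp(v)\setminus\{i_a\}$. Applying the last property with $j=i_b$ for $b\ne a$, which is legal since $i_b\in\supp(v)$, yields $s^{(a)}_{i_b}=0$. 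Thus the $k\times k$ submatrix of $M$ on rows $s^{(1)},\dots,s^{(k)}$ and columns $i_1,\dots,i_k$ equals the identity, exhibiting a permutation submatrix of size $k$; taking the supremum over $v$ gives $\HHdim(\cS)\le k^*$.

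There is no real obstacle here: the argument is purely definitional once one unwinds what ``$\supp(s\circ v)=\{i\}$'' means in terms of $s$ and $v$. The only subtlety worth being careful about is that in the converse direction we need every $i_b$ to lie in $\supp(v)$ (which is immediate from $i_b\in H(\cS,v)$), since this is what forces the off-diagonal zeros in the submatrix. After that, the two directions together yield $\HHdim(\cS)=k^*$, completing the proof.
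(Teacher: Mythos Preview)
Your proof is correct and follows essentially the same approach as the paper: both directions construct the natural witness (the indicator vector of the chosen columns in one direction, the permutation submatrix determined by the heavy-hitter coordinates and their witnessing sets in the other). Your write-up is in fact slightly more careful than the paper's, since you explicitly verify that each $i_b\in\supp(v)$ before concluding that the off-diagonal entries vanish.
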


\begin{proof}
Denote $\cS=\set{s_1,s_2,\ldots}$,
and let $k$ be the largest size of permutation submatrix of $M$.
Suppose that this submatrix is formed
by rows $i_1,\ldots,i_k$ and columns $j_1,\ldots,j_k$. 
Consider a vector $v\in\zo^n$ with $1$ exactly in coordinates $j_1,\ldots,j_k$.
Then it is straightforward to see that 
\[
  \forall l\in [k],
  \quad
  \supp(s_{i_l}\circ v) =\{j_l\}. 
\]
Thus $k\le \HHdim(\cS)$.

For the other direction, 
let $u\in \RR^{n}$ be a vector that realizes $\HHdim(\cS)$.
Then there are sets $s_{i_1}, \ldots, s_{i_{\HHdim(\cS)}}\in \cS$
and coordinates $j_1, \ldots, j_{\HHdim(\cS)}\in [n]$ such that
\[
  \forall l\in [\HHdim(\cS)],
  \quad
  \supp(s_{i_l}\circ u) = \{j_l\}. 
\]
It is easily verified that rows $i_1,\ldots, i_{\HHdim(\cS)}$
and columns $j_1, \ldots, j_{\HHdim(\cS)}$ form
a permutation submatrix of $M$. 
Thus, $\HHdim(\cS)\le k$, which completes the proof.
\end{proof}

Using the above lemma one can analyze the heavy-hitter dimension
of several explicit set systems,
as listed in the first four lines in Table~\ref{tab:hhdim}.
The proofs are straightforward we give only one for example
in Proposition~\ref{prop:randomsets} below. 
This lemma also implies bounds for several operations on set systems,
as listed in the last three lines in Table~\ref{tab:hhdim}.
The proofs are straightforward, and we give only one for example
in Proposition~\ref{prop:subadd1} below. 

\begin{proposition}[Random Sets]
  \label{prop:randomsets}
Let $\cS\subset 2^{[n]}$ be a set system of size $\card{\cS}=k$,
whose incidence matrix $M$ is formed entries that are
independent Bernoulli random variables with parameter $p\in(0,1/2]$. 
In other words, every set $S_i\in\cS$ contains every coordinate $j\in[n]$ independently with probability $p$. 
Then
\[
  \Pr\big[ \HHdim(\cS) \le O(\log(nk)/p) \big]
  \geq
  1-1/\poly(nk). 
\] 
\end{proposition}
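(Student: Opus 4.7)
The plan is to invoke Lemma~\ref{lem:HHdimPermuation} to reduce the claim to a purely combinatorial statement about the random $|\cS|\times n$ incidence matrix $M$: namely, that with high probability $M$ contains no $d\times d$ permutation submatrix for any $d$ larger than $\Theta(\log(nk)/p)$. From there, I would apply a straightforward union bound over choices of rows, columns, and permutation patterns.

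Concretely, fix an integer $d$. To witness $\HHdim(\cS)\ge d$, there must exist a subset $R\subset[k]$ of $d$ rows and a subset $C\subset[n]$ of $d$ columns such that $M|_{R\times C}$ is some permutation matrix. For a given $R,C$ and a given permutation $\pi$ on $[d]$, the probability that $M|_{R\times C}$ realizes this exact pattern is $p^d(1-p)^{d(d-1)}$, since it specifies $d$ entries to be $1$ and $d(d-1)$ entries to be $0$, all independent. A union bound over $\binom{k}{d}\binom{n}{d}$ choices of $R,C$ and $d!$ permutations yields
\[
 \Pr[\HHdim(\cS)\ge d]
 \;\le\; \binom{k}{d}\binom{n}{d}\, d!\cdot p^d(1-p)^{d(d-1)}
 \;\le\; \left(\frac{e^2 knp}{d}\right)^{\!d} e^{-pd(d-1)},
\]
where I use the standard estimates $\binom{k}{d}\le(ek/d)^d$, $\binom{n}{d}\le(en/d)^d$, $d!\le d^d$, and $1-p\le e^{-p}$.

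I then set $d \eqdef \lceil C\log(nk)/p\rceil$ for a sufficiently large absolute constant $C$. Under this choice, $pd\ge C\log(nk)$, so the Gaussian-like factor $e^{-pd(d-1)}$ shrinks like $(nk)^{-\Omega(Cd)}$, while the polynomial/entropy factor $(e^2knp/d)^d$ is only $(\poly(nk))^d$. Balancing the two, and taking $C$ large enough (say $C\ge 4$ suffices after accounting for the $p\le 1/2$ regime), the Gaussian factor dominates and drives the whole bound down to $1/\poly(nk)$, which is exactly what is claimed.

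The argument is essentially routine once Lemma~\ref{lem:HHdimPermuation} is available; there is no serious obstacle, only a careful choice of constants. The one place to be slightly attentive is the regime where $p$ is very small relative to $1/(nk)$, so that $d=\Theta(\log(nk)/p)$ might exceed $\min(k,n)$; in that degenerate regime the bound $\HHdim(\cS)\le\min(k,n)\le O(\log(nk)/p)$ holds trivially, so the claim only needs to be proved for $d\le\min(k,n)$, and the union bound above applies without change.
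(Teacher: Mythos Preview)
Your proposal is correct and follows essentially the same approach as the paper: invoke Lemma~\ref{lem:HHdimPermuation} and union-bound over all choices of rows, columns, and permutation patterns, then set the threshold $d=\Theta(\log(nk)/p)$ so that the $e^{-\Theta(pd^2)}$ factor overwhelms the $(nk)^{O(d)}$ entropy term. The only cosmetic differences are that the paper counts ordered row/column sequences (giving $(nk)^t$ directly) and absorbs the $p^t$ factor, whereas you track the binomials and $p^d$ explicitly; your extra remark about the degenerate regime $d>\min(k,n)$ is a nice touch the paper leaves implicit.
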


\begin{proof}
Fix $t$ rows and $t=c\log(nk)/p$ columns of $M$
and consider the corresponding submatrix $M'$. 
The probability that $M'$ is an identity matrix is 
$p^{t}(1-p)^{t^2-t} \le e^{-pt^2/2}$.
For the event $\HHdim(\cS) \ge t$ to occur
there must be ordered sequences of $t$ rows and $t$ columns
that yield an identity $M'$. 
Since the number of such choices is at most $n^t\cdot k^t$, we obtain 
\[
  \Pr\big[\HHdim(\cS) \ge t\big]
  \le (nk)^t\cdot e^{-pt^2/2}
  \le e^{-pt^2/4}
  \le 1/\poly(nk). 
\]
\end{proof}

\begin{proposition}[Sub-Additivity]
  \label{prop:subadd1}
For every $\cS_1, \cS_2\subset2^{[n]}$, 
\[
  \HHdim(\cS_1\cup \cS_2) \le \HHdim(\cS_1) + \HHdim(\cS_2).
\]
\end{proposition}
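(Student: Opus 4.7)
The plan is very short because the definition of $\HHdim$ essentially distributes over unions of set systems. First I would unpack the definition: fix an arbitrary vector $v\in\RR^n$ and examine the set $H(\cS_1\cup\cS_2,v)$, which consists of all indices $i\in[n]$ for which some $s$ drawn from $\cS_1\cup\cS_2$ satisfies $\supp(s\circ v)=\{i\}$. Since any such witnessing $s$ must belong to $\cS_1$ or to $\cS_2$, the straightforward set equality
\[
H(\cS_1\cup\cS_2,v)\;=\;H(\cS_1,v)\,\cup\,H(\cS_2,v)
\]
holds for every $v$.

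Next I would apply the elementary bound $|A\cup B|\le|A|+|B|$ to this equality to obtain
\[
|H(\cS_1\cup\cS_2,v)|\;\le\;|H(\cS_1,v)|+|H(\cS_2,v)|\;\le\;\HHdim(\cS_1)+\HHdim(\cS_2),
\]
where the second inequality follows directly from Definition~\ref{defn:HHdim} applied to each $\cS_j$ at the same vector $v$. Finally, taking the supremum over all $v\in\RR^n$ on the left-hand side yields the desired inequality.

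There is essentially no obstacle here; the main thing to be careful about is to observe that the bound $|H(\cS_j,v)|\le\HHdim(\cS_j)$ is used \emph{at the same $v$} for both $j\in\{1,2\}$, so no juggling of distinct maximizers is required before taking the sup. The same argument could alternatively be phrased via Lemma~\ref{lem:HHdimPermuation}: the incidence matrix of $\cS_1\cup\cS_2$ is the row-concatenation of those of $\cS_1$ and $\cS_2$, and any permutation submatrix in the concatenation splits by rows into a permutation submatrix of $\cS_1$'s incidence matrix and one of $\cS_2$'s, whose sizes sum to at most $\HHdim(\cS_1)+\HHdim(\cS_2)$.
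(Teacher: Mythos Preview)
Your proof is correct. Your primary argument works directly from Definition~\ref{defn:HHdim}, observing that the witness set $H(\cdot,v)$ decomposes as a union over the two families; this is more elementary than the paper's route, which instead invokes Lemma~\ref{lem:HHdimPermuation} and argues that any permutation submatrix of the row-stacked incidence matrix $\bigl[\begin{smallmatrix} M_1 \\ M_2 \end{smallmatrix}\bigr]$ splits by rows into permutation submatrices of $M_1$ and $M_2$. You in fact sketch exactly this alternative in your final paragraph, so you have both arguments in hand. Your direct approach has the minor advantage of not relying on the characterization lemma, while the paper's approach makes the sub-additivity visually transparent at the level of incidence matrices; neither buys anything substantive over the other here.
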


\begin{proof}
Let $M_1$ and $M_2$ be the incidence matrices of $\cS_1$ and $\cS_2$,
respectively. 
Then the incidence matrix of $\cS_1 \cup \cS_2$ is,
using block-matrix notation, simply
$M =  \bigl[\begin{smallmatrix} M_1 \\ M_2 \end{smallmatrix}\bigl]$. 
Every permutation submatrix of $M$ 
can be partitioned into $M_1$ and $M_2$.
As each part must contain a permutation submatrix that uses all its rows,
the proposition follows.
\end{proof}

\subsection{Streaming Algorithm for Subset-$\ell_0$}
\label{sec:SubsetLpALg} 

We now design a one-pass streaming algorithm for the Subset-$\ell_0$ problem 
in insertion-only stream. 
Recall that in this model the input is a stream $\tuple{a_1, \ldots, a_m}$, 
where each item $a_j\in [n]$ represents an increment to coordinate $a_j$ 
of a vector $v\in\RR^n$. 
The streaming algorithm has two phases,
an update phase that scans the stream,
and a query phase that evaluates a query $s\in \cS$.
(In one case below, the query phase formally does not require a query $s$,
and reports one list that implicitly represents every query $s\in\cS$.)
We assume that the set system $\cS$ is given to the algorithm via 
a read-only tape, and thus requires no storage.
See Section~\ref{sec:prelims} for detailed definitions. 
 
The algorithm uses a well-known technique of subsampling the coordinates of $v$
(i.e., the set $[n]$) at a predetermined rate $p\in(0,1]$, 
and producing an estimate only if 
the resulting vector has $O(\epsilon^{-2})$ non-zeros. 
Usually, counting the number of sampled non-zeros requires little space,
but this is more challenging in our case of all norms $s\in\cS$. 

The key to bounding the total space usage is the following proposition,
which bounds the global number of samples stored,  
when each of these samples is ``needed'' locally by some subset $s\in \cS$. 
This condition has a parameter $k$, 
and the reader may initially think of $k=1$. 

\begin{proposition} \label{prop:subset}
Let $\cS\subset\zo^n$ be a set system.
Suppose $Z\subset [n]$ and $k\ge 1$ are such that for every $i\in Z$ 
(in words, index $i$ is ``$k$-needed'' by some $s\in \cS$)
\begin{align} \label{eq:kneeded}
  \exists s\in \cS, &\quad  
  i\in s  
  \text{ and }
  \card{Z\cap s} \le k. 
\end{align}
Then $\card{Z} \le k\cdot \HHdim(\cS)$. 
\end{proposition}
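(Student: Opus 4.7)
My plan is to establish the stated bound by exhibiting, for the given $Z$, a vector $v\in\RR^n$ whose heavy-hitter set $H(\cS,v)$ has at least $|Z|/k$ elements; this immediately yields $\HHdim(\cS)\ge|Z|/k$, which is the claim. The natural candidate is $v=\mathbf{1}_V$ for some $V\subseteq Z$ constructed from the given witnesses $\{s_i\}_{i\in Z}$ supplied by the hypothesis.

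As a warm-up (and already giving the correct order), there is a direct probabilistic argument. Sample $V\subseteq Z$ by including each element independently with probability $p$. For each $i\in Z$, using its own witness $s_i$, the event ``$i\in V$ and $s_i\cap V=\{i\}$'' requires $i$ to be selected and the at most $k-1$ other members of $s_i\cap Z$ to be unselected, and thus occurs with probability at least $p(1-p)^{k-1}$. By linearity of expectation, $\EE\bigl[|H(\cS,\mathbf{1}_V)|\bigr]\ge |Z|\cdot p(1-p)^{k-1}$, and choosing $p=1/k$ gives an expected count of at least $|Z|/(ek)$. Some realization therefore certifies $\HHdim(\cS)\ge |Z|/(ek)$.

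To obtain the tight factor $k$, I would instead proceed by induction on $|Z|$, proving the stronger statement that a \emph{good} subset $V\subseteq Z$ (one on which every element is a heavy-hitter of $\mathbf{1}_V$) of size at least $|Z|/k$ always exists. Pick any $i_0\in Z$, let $A = s_{i_0}\cap Z$ (so $|A|\le k$ and $i_0\in A$), and set $Z'=Z\setminus A$. Every $j\in Z'$ still satisfies the hypothesis via its original witness, since $|s_j\cap Z'|\le |s_j\cap Z|\le k$, so by induction there is a good $V'\subseteq Z'$ with $|V'|\ge |Z'|/k\ge |Z|/k - 1$. I would then take $V=V'\cup\{i_0\}$, which has size at least $|Z|/k$; the new element $i_0$ is automatically a heavy-hitter of $\mathbf{1}_V$ via $s_{i_0}$, because $s_{i_0}\cap V'\subseteq s_{i_0}\cap Z' = A\setminus A = \emptyset$ and therefore $s_{i_0}\cap V=\{i_0\}$.

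The main obstacle will be verifying that every $j\in V'$ remains a heavy-hitter of $\mathbf{1}_V$ after $i_0$ is adjoined: its inductively chosen witness $s^{\ast}_j$ satisfied $s^{\ast}_j\cap V' = \{j\}$, but in $V=V'\cup\{i_0\}$ this intersection may become $\{j,i_0\}$ if $i_0\in s^{\ast}_j$. To handle this I would strengthen the induction to return, together with $V'$, an explicit witness assignment, and then either (i) choose $i_0$ carefully to avoid all of the previously assigned witnesses, using a counting bound on how many $s^{\ast}_j$ a single index $i_0$ can lie in, or (ii) apply a local-exchange argument to replace a conflicted witness $s^{\ast}_j$ by a new one that excludes $i_0$. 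Getting this bookkeeping to close while maintaining $|V|\ge|Z|/k$ is the technical crux; if a fully clean deterministic construction proves elusive, the probabilistic bound above already yields the conclusion up to an $O(1)$ factor, which suffices for the applications in the paper.
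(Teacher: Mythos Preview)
Your probabilistic warm-up is correct and yields $\card{Z}\le ek\cdot\HHdim(\cS)$, which is indeed enough for the applications. However, your inductive argument for the tight constant has a genuine gap, and you have correctly identified it: when you adjoin $i_0$ to $V'$, the inductively-chosen witnesses $s^{\ast}_j$ for $j\in V'$ may contain $i_0$, destroying the heavy-hitter property. Neither of your proposed repairs is substantiated. For (i), a single index $i_0$ can lie in arbitrarily many of the witnesses $s^{\ast}_j$, so there is no counting bound to exploit. For (ii), there is no reason a replacement witness avoiding $i_0$ should exist. As written, the tight bound is not proved.

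The paper takes a different and much cleaner route: induct on $k$, not on $\card{Z}$. The base case $k=1$ is immediate from the definition of $\HHdim$. For the inductive step, instead of trying to build a large heavy-hitter set in one shot, peel off a subset $A\subseteq Z$ by starting with $A=Z$ and repeatedly deleting any $i\in A$ that is not $1$-needed relative to the current $A$ (i.e., no $s\in\cS$ has $A\cap s=\{i\}$). At termination every element of $A$ is $1$-needed, so $\card{A}\le\HHdim(\cS)$ by the base case. The key observation is that this deletion process never drops any $\card{A\cap s}$ from $1$ to $0$, so every $s\in\cS$ with $Z\cap s\neq\emptyset$ still has $A\cap s\neq\emptyset$. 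Hence for each $i\in Z\setminus A$, its original witness $s$ satisfies $\card{(Z\setminus A)\cap s}\le\card{Z\cap s}-\card{A\cap s}\le k-1$, so $Z\setminus A$ is $(k-1)$-needed and the inductive hypothesis gives $\card{Z\setminus A}\le(k-1)\HHdim(\cS)$. Adding the two bounds finishes the proof. The point is that you do not need $A$ to be large; you need it to hit every relevant $s$, which is exactly what the greedy stopping rule guarantees.
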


\begin{proof}
We proceed by induction on $k$. 
For the base case $k=1$, 
consider a vector $v\in\zo^n$ whose support is exactly the given $Z$.
Then for every $i\in Z$, there is $s\in \cS$ such that $Z\cap s=\set{i}$,   
and thus $D(s,v)=\set{i}$. 
It follows that
$\card{Z} 
  \le \card{H(\cS,v)} 
  \le \HHdim(\cS)$. 

For the inductive step, consider $k\ge 2$. 
Given $Z$, construct $A\subset Z$ as follows. 
Start with $A=Z$ and iteratively remove from it 
an index $i\in A$ if there is no $s\in \cS$ with $A\cap s=\set{i}$ 
(i.e., if $i$ is not $1$-needed), 
until no such index $i$ exists.
We claim that the final set $A$ satisfies 
\[
  \forall s\in \cS,
  \qquad
  Z\cap s \neq\emptyset   \ \Leftrightarrow\ 
  A\cap s \neq\emptyset .
\]
For the forward direction, 
observe that initially $\card{A\cap s} = \card{Z\cap s} \ge 1$,
and that no iteration never decreases any $\card{A\cap s}$ from $1$ to $0$. 
The reverse direction is obvious because $A\subset Z$. 

We can verify that $Z\setminus A$ satisfies the induction hypothesis,
i.e., that every $i\in Z\setminus A$ is $(k-1)$-needed. 
Indeed, since $i\in Z$ it is $k$-needed by some $s\in \cS$,
as expressed by~\eqref{eq:kneeded},
and by the claim, this same $s$ also satisfies $\card{A\cap s}\ge 1$.
Hence, $\card{(Z\setminus A)\cap s} \le k-1$, which shows $i$ is $(k-1)$-needed. 
Applying the induction hypothesis, we have
$\card{Z\setminus A} \leq (k-1)\cdot \HHdim(\cS)$. 

In addition, $A$ satisfies the induction's base case $k=1$,
because the iterations stop when every $i\in A$ is $1$-needed.
Hence $\card{A}\le \HHdim(\cS)$, and we conclude that 
$\card{Z} = \card{Z\setminus A} + \card{A} \le k\cdot \HHdim(\cS)$. 
\end{proof}
 
Our algorithm is based on simulating 
the simple estimator defined in the following lemma.
(The difficulty will be to apply it to $s\circ v$ for $s\in\cS$ 
that is not known in advance.) 

\begin{lemma} \label{lem:qv-pairwise}
Fix $v\in\RR^n$, 
and sample its coordinates to form $v'\in\RR^n$ as follows.
Suppose each coordinate is $v'_i = v_i X_i$, 
where $X_1,\ldots,X_n$ are pairwise-independent identically distributed Bernoulli random variables with parameter $p\in(0,1]$. 
Then 
\begin{align*}
  & \ex\big[ \tfrac{1}{p}\|v'\|_0\big] = \|v\|_0 , 
  \quad
  \var\big( \tfrac{1}{p}\|v'\|_0 \big) = \tfrac{1-p}{p} \|v\|_0, 
  \quad\text{ and }\quad
  \\
  & 
  \Pr\Big[ \big| \tfrac{1}{p}\|v'\|_0 - \|v\|_0 \big| 
           \ge 3 (\tfrac{1-p}{p} \|v\|_0)^{1/2} 
     \Big]
  \le \tfrac{1}{9} .
\end{align*}
\end{lemma}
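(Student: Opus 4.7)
\bigskip
\noindent\textbf{Proof plan for Lemma~\ref{lem:qv-pairwise}.}

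My plan is to express $\|v'\|_0$ as a sum of Bernoulli indicators supported on $\supp(v)$ and then do a textbook mean/variance/Chebyshev calculation. First I would observe that, since $v'_i = v_i X_i$, the coordinate $v'_i$ is nonzero iff both $v_i\ne 0$ and $X_i = 1$; hence
\[
  \|v'\|_0 \;=\; \sum_{i \in \supp(v)} X_i.
\]
Consequently $\ex[\|v'\|_0] = p\,\|v\|_0$ by linearity and the Bernoulli$(p)$ marginals of $X_i$, which immediately gives $\ex[\tfrac1p \|v'\|_0] = \|v\|_0$.

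Next I would compute the variance using pairwise independence of the $X_i$'s. Since $\var(X_i) = p(1-p)$ and pairwise independence suffices to kill all covariance cross terms,
\[
  \var\!\Bigl(\sum_{i \in \supp(v)} X_i\Bigr) \;=\; \sum_{i \in \supp(v)} \var(X_i) \;=\; p(1-p)\,\|v\|_0.
\]
Scaling by $1/p$ on each side squares to $1/p^2$ inside the variance, yielding $\var(\tfrac1p \|v'\|_0) = \tfrac{1-p}{p}\,\|v\|_0$, as claimed.

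Finally, for the tail bound I would just invoke Chebyshev's inequality on the random variable $Y \eqdef \tfrac1p \|v'\|_0$: for any $t>0$,
\[
  \Pr\!\Bigl[|Y - \ex[Y]| \;\ge\; t \sqrt{\var(Y)}\Bigr] \;\le\; \tfrac{1}{t^2}.
\]
Plugging in $t = 3$ and substituting $\var(Y) = \tfrac{1-p}{p}\|v\|_0$ gives exactly the stated bound $\tfrac19$.

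There is essentially no obstacle: pairwise independence is exactly what Chebyshev needs, and the only thing worth noting carefully is that $v$ is treated as fixed (so $\|v\|_0$ is a deterministic quantity) and that the restriction of the sum to $\supp(v)$ is what makes the variance scale with $\|v\|_0$ rather than with $n$.
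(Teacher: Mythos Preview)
Your proposal is correct and matches the paper's own proof essentially verbatim: the paper simply states that the expectation and variance follow from direct calculation (noting pairwise independence suffices) and that the tail bound is Chebyshev's inequality, which is exactly the argument you have spelled out.
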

\begin{proof}
The expectation and variance follow from direct calculation
(note that it suffices to assume pairwise-independence).
The tail bound is straightforward from Chebyshev's inequality. 
\end{proof}

Our basic algorithm for storing a subsample of the coordinates
is described in Algorithm~\ref{alg:single-estimator}. 
The idea is to sample the universe $[n]$ at rate $p\in(0,1]$
using pairwise independent random variables $\xi_1,\ldots,\xi_n\in\zo$ 
(this can be viewed as a hash function $\xi:[n]\to\zo$).
We store the sampled items (coordinates) from the stream 
so long as they are ``needed'' by some $s\in \cS$, 
or more precisely, $U$-needed in the sense of Proposition~\ref{prop:subset}.
Here, the budget parameter $U$ represents 
a ``local'' bound that holds separately for each $s\in\cS$. 
We show that at the end, for every $s\in\cS$, 
if $s$ contains at most $U$ sampled coordinates, 
then all these samples are completely stored;
otherwise, the number of samples stored is at least $U$. 
We will later use this algorithm to simulate an offline pairwise sampling
from a desired $s\in\cS$.

\begin{algorithm}
\caption{Bounded-Sampler for $\cS\subset\zo^n$ \label{alg:single-estimator}}
\begin{algorithmic}[1]
  \State \textbf{Input:} $p\in (0,1], U\in [1,n]$, an insertion-only stream $\langle a_1, \ldots, a_m\rangle$, where each item $a_j\in [n]$ 
  \State \textbf{Initialize:} 
  \State \quad $\cH\gets \emptyset$
  \State \quad pick random $\xi_1,\ldots,\xi_n\in\zo^n$, 
  with each $\PP[\xi_i = 1] = p$ and pairwise independent
  \State \textbf{Update($a_j$):}
  \If{$\xi_{a_j}=1$ and $a_j \not\in \cH$}
  \State $\cH\gets\cH\cup \{a_j\}$
  \While{there is $i\in \cH$ such that all $s\in \cS$ with $i\in s$ satisfy $|s\cap \cH| > U$}
  \State remove this $i$ from $\cH$
  \Comment{remove $i$ that is not $U$-needed}
  \EndWhile
  \EndIf	
  
  \State \textbf{Query():}  
  \State \quad \Return $\cH$
  \Comment{implicit answer $\supp(\xi\circ s\circ v)$ for each $s\in \cS$ }
\end{algorithmic}
\end{algorithm}


\begin{lemma}
\label{lemma:alg-sampler}
Consider Algorithm~\ref{alg:single-estimator} 
for $\cS\subset\zo^n$ with parameters $p\in (0,1]$ and $U\in [1, n]$.
When run on an insertion-only stream accumulating to $v\in \RR^n$, 
it makes one pass, uses $O(U)\cdot \HHdim(\cS)$ words of space,
and outputs $\cH\subset[n]$ of size $|\cH|\le U\cdot \HHdim(\cS)$. 
Moreover, suppose that $\xi\in\zo^n$ is the sampling vector from the algorithm.
Then for every $s\in \cS$, 
if 
$\norm{\xi\circ s\circ v}_0 \leq U$
then 
\[
   \supp(\xi\circ s\circ v) \subset \cH;
\]
and otherwise $|\cH\cap s|\ge U$.
\end{lemma}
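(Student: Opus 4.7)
The plan is to extract a single invariant that is maintained after each update, apply Proposition~\ref{prop:subset} to obtain the space bound, and then derive the two cases of the ``Moreover'' statement from simple monotonicity arguments on $|\cH\cap s|$. The invariant is that whenever the while loop terminates, every $i\in\cH$ is $U$-needed, i.e., there exists $s\in\cS$ with $i\in s$ and $|s\cap\cH|\le U$; this is immediate from the loop's stopping rule. Applying Proposition~\ref{prop:subset} with $Z=\cH$ and $k=U$ yields $|\cH|\le U\cdot\HHdim(\cS)$, and storing $\cH$ together with $O(\log n)$ words for pairwise-independent $\xi$ accounts for the claimed space of $O(U)\cdot \HHdim(\cS)$ words.

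The second ingredient is the key observation $\cH\cap s=\cH\cap T_s$, where $T_s\eqdef\supp(\xi\circ s\circ v)$. Indeed, every $i\in\cH$ was added only after $\xi_i=1$ and after $i$ appeared in the stream, so $v_i\ge 1$, and thus $\cH\subseteq\supp(\xi\circ v)$; combined with $T_s\subseteq s$ this gives the equality, so $|s\cap\cH|=|T_s\cap\cH|$. For the first case of the ``Moreover'' part, with $\|\xi\circ s\circ v\|_0=|T_s|\le U$, I fix $i\in T_s$; at its first stream occurrence $i$ is added to $\cH$, and as long as $i\in\cH$ one has $|s\cap\cH|=|T_s\cap\cH|\le|T_s|\le U$, so $s$ itself certifies that $i$ is $U$-needed and the while loop can never remove it. Hence $T_s\subseteq\cH$ at the end.

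For the second case, $|T_s|>U$, I will prove that $|T_s\cap\cH|$ first reaches $U$ and then never drops below $U$. For reaching $U$: let $t_1<\cdots<t_{|T_s|}$ be the positions in the stream of the first occurrences of the elements of $T_s$. An induction on $j\le U$ shows that just after processing update $t_j$ we have $|T_s\cap\cH|=j$, since that update adds one new $T_s$-element to $\cH$, and the witnessing argument above prevents any $T_s$-element from being removed while the count remains at most $U$. For staying at $\ge U$: a decrement from $U$ would require removing some $i\in T_s\cap\cH=s\cap\cH$ precisely when $|s\cap\cH|=U$, but then $s$ again witnesses $U$-neededness and blocks the removal. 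Hence $|\cH\cap s|=|T_s\cap\cH|\ge U$ at the end.

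I expect the second case to be the main obstacle, because between the positions $t_j$ the while loop may fire in response to entirely unrelated stream items, so one must argue that the loop never strips elements of $T_s$ from $\cH$ during this ``growing phase'' without explicitly tracking the other insertions. The equality $|s\cap\cH|=|T_s\cap\cH|$ is precisely the device that resolves this: a bound on the $T_s$-count translates into a bound on $|s\cap\cH|$, which universally shields every $i\in T_s\cap\cH$ from removal by exhibiting $s$ as a $U$-needed witness.
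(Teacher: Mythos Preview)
Your proposal is correct and follows essentially the same approach as the paper: both arguments extract the invariant that every $i\in\cH$ is $U$-needed after each update, apply Proposition~\ref{prop:subset} for the size bound, and then handle the two cases of the ``Moreover'' part by the same monotonicity observation that no $i\in s\cap\cH$ can be removed while $|s\cap\cH|\le U$. Your treatment is somewhat more explicit than the paper's---you isolate the identity $\cH\cap s=\cH\cap T_s$ and run an induction on the first $U$ arrivals from $T_s$, whereas the paper simply notes that $|s\cap\cH|$ is incremented more than $U$ times and any decrement can only take it from $U{+}1$ down to $U$---but the underlying mechanism is identical.
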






\begin{proof}
Observe that after each update operation, 
every $i\in \cH$ is $U$-needed, i.e., 
\[
  \exists s\in \cS, \quad 
  i\in s \text{ and } |s\cap \cH| \le U.
\]
By applying Proposition~\ref{prop:subset} to this $\cH$
we have that $|\cH|\le U\cdot \HHdim(\cS)$. 
This bound applies in particular to the  output $\cH$,
and also implies that the algorithm uses $O(U)\cdot \HHdim(\cS)$ words of space. 

Now consider $\cH$ at the end of the stream, and some $s\in \cS$. 
Let $Z = \supp(\xi\circ s\circ v)$.
If $|Z|\le U$, 
then every $i\in Z$ has to been stored in the final $\cH$
(because it must be added at some update and can never be removed),
which proves that $Z \subset \cH$. 

Next, suppose that $|Z|>U$.
Observe that every index $i\in Z$ is added at some point to $\cH$, 
hence $|s\cap \cH|$ is increased more than $U$ times.
Moreover, whenever any $i\in s\cap \cH$ is removed from $\cH$, 
it may only decrease $|s\cap \cH|$ from $U+1$ to $U$, but never below $U$. 
It follows that at the end of the execution, $|s\cap \cH| \ge U$. 
\end{proof}

We shall use Algorithm~\ref{alg:single-estimator} as a subroutine twice,
first to compute an $O(1)$-approximation to a query $s\in \cS$, 
and then (a refined version of it) 
to compute a $(1\pm\epsilon)$-approximation.
En route to an $O(1)$-approximation, 
we introduce Algorithm~\ref{alg:const-detector}, 
which simply runs Algorithm~\ref{alg:single-estimator} 
in parallel $\Theta(\log \log n)$ times,
and then when given a query $s\in \cS$, it outputs one bit. 
The next lemma shows that this algorithm solves a promise (gap) version:
if $\norm{v\circ s}_0\le 1/(2p)$ then with high probability it outputs $0$, 
and if $\norm{v\circ s}_0\ge 2/p$ then with high probability it outputs $1$.

\begin{algorithm}
\caption{Constant-Detector for $\cS\subset\zo^n$ \label{alg:const-detector}}
\begin{algorithmic}[1]
  \State \textbf{Input:} $r\in [\tfrac14, n]$, an insertion-only stream $\langle a_1, \ldots, a_m\rangle$, where each item $a_j\in [n]$ 
  \State \textbf{Initialize:} 
  \State \quad $U \gets 100$, $p\gets \min(1, U/r)$, $t\gets\Theta(\log\log n)$ 
  \State \quad let $\cA_1,\ldots, \cA_t$ be instances of Bounded-Sampler with parameters $p$ and $U$
  \State\textbf{Update($a_j$):}
  \quad \State \quad update each $\cA_i$ with $a_j$ 	
  \State \textbf{Query($s\in \cS$):}   
  \State \quad query each $\cA_i$ for $s$ and let $\cH_i$ be its output 
  \State \quad $\bar z\gets \tfrac{1}{p} \cdot \operatorname{median}\big(|\cH_1\cap s|,\ldots,|\cH_t\cap s|\big)$
  \State \quad \Return $\indice{\bar z\ge r}$
\end{algorithmic}
\end{algorithm}


\begin{lemma}
\label{lemma:const-detector}
Consider Algorithm~\ref{alg:const-detector} 
for $\cS\subset\zo^n$ with parameter $r\in [\tfrac14, n]$.
When run on an insertion-only stream accumulating to $v\in \RR^n$, 
it makes one pass and uses $O(\log\log n\cdot \HHdim(\cS))$ words of space,
and then when queried for $s\in \cS$, 
with probability at least $1-O(1/\log^2 n)$ its output satisfies:
if $\|v\circ s\|_0\le r/2$ the output is $0$, 
and if $\|v\circ s\|_0 \ge 2r$ the output is $1$.
\end{lemma}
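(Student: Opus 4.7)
The plan is to derive the space bound directly from Lemma~\ref{lemma:alg-sampler}, and to obtain correctness by viewing each parallel instance $\cA_i$ as the pairwise-independent sampling estimator from Lemma~\ref{lem:qv-pairwise} whose output has been ``clamped'' by the local budget $U$. A standard median-of-independents argument will then boost the single-instance success probability to the desired $1-O(1/\log^2 n)$.

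For space, each Bounded-Sampler uses $O(U\cdot \HHdim(\cS))=O(\HHdim(\cS))$ words by Lemma~\ref{lemma:alg-sampler} because $U=100$ is constant, so the $t=\Theta(\log\log n)$ parallel instances use $O(\log\log n\cdot \HHdim(\cS))$ words in total.

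For correctness, I will fix a query $s\in\cS$, look at one sampler with its sampling vector $\xi$, and write $X\eqdef\|\xi\circ s\circ v\|_0$. Two facts from the setup drive the analysis. First, every index added to $\cH$ satisfies $\xi_{a_j}=1$ and has been updated in the stream, so $\cH\cap s\subseteq \supp(\xi\circ s\circ v)$ and hence $|\cH\cap s|\le X$. Second, Lemma~\ref{lemma:alg-sampler} gives $|\cH\cap s|=X$ whenever $X\le U$, and $|\cH\cap s|\ge U$ whenever $X>U$. Meanwhile Lemma~\ref{lem:qv-pairwise} applied to $s\circ v$ gives $\EE[X]=p\|v\circ s\|_0$ and $\var(X)\le p\|v\circ s\|_0$.

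Then I will split on the gap. If $\|v\circ s\|_0\le r/2$, then $\EE[X]\le U/2$, and Chebyshev yields $\Pr[X\ge U]\le O(1/U)$; combined with $|\cH\cap s|\le X$ this makes $|\cH\cap s|/p<r$ with probability $\ge 1-O(1/U)$, so the sampler outputs $0$. If $\|v\circ s\|_0\ge 2r$, then $\EE[X]\ge 2U$, and Chebyshev yields $\Pr[X<U]\le O(1/U)$; combined with the lower bound $|\cH\cap s|\ge U$ whenever $X\ge U$, this makes $|\cH\cap s|/p\ge U/p=r$ with probability $\ge 1-O(1/U)$, so the sampler outputs $1$. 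Choosing $U=100$ pushes each sampler's failure probability comfortably below, say, $1/3$. Since the $t$ samplers use independent randomness, the standard Chernoff bound for the median of independent $\zo$-valued estimates gives failure probability $e^{-\Omega(t)}$, which is $O(1/\log^2 n)$ once $t=\Theta(\log\log n)$ is chosen with a large enough constant.

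The main subtlety I expect is handling the one-sided truncation of $|\cH\cap s|$. The natural unbiased estimator $X/p$ is not directly observable once $X>U$, so a single symmetric concentration inequality does not apply; instead, the two asymmetric guarantees from Lemma~\ref{lemma:alg-sampler} must be invoked separately, matching each gap case to the side of truncation that helps us. The choice $p=\min(1,U/r)$ also needs a brief sanity check in the degenerate regime $p=1$ (i.e., $r\le U$), where both cases are actually decided deterministically by Lemma~\ref{lemma:alg-sampler}.
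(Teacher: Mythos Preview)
Your plan is correct and essentially the same as the paper's: both handle the degenerate regime $p=1$ separately, then in each gap case use Chebyshev (either directly or via the packaged tail bound of Lemma~\ref{lem:qv-pairwise}) to control $X=\|\xi\circ s\circ v\|_0$ around the threshold $U$, invoke the two-sided guarantee of Lemma~\ref{lemma:alg-sampler} to transfer this to $|\cH\cap s|$, and finish with median amplification over $t=\Theta(\log\log n)$ independent instances. Your explicit observation $\cH\cap s\subseteq\supp(\xi\circ s\circ v)$ is a clean way to handle the low case (the paper instead uses that $X\le U$ forces $|\cH\cap s|=X$), but this is a stylistic difference rather than a different route.
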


\begin{proof}
The algorithm's space usage 
is dominated by the $t$ instances of the Bounded-Sampler,
and thus follows from Lemma~\ref{lemma:alg-sampler}.

If $r\le U$, then $p=1$ and the algorithm is deterministic, 
with the following guarantee by Lemma~\ref{lemma:alg-sampler}. 
If $\norm{s\circ v}_0 \le r/2 \le U$, 
then $z=\norm{s\circ v}_0$ is an exact estimator, and the output is $0$. 
If $\norm{s\circ v}_0 \ge 2r$, 
then $z \ge \min(\norm{s\circ v}_0,U)\ge r$ and the output is $1$. 

We thus assume henceforth that $r>U$ and thus $p=U/r$. 
Consider an instance $\cA_i$ of the Bounded-Sampler,
let $\cH_i$ be its output, and let $\xi^i\in\zo^n$ be its sampling vector. 
We would like to analyze the quantity $\tfrac{1}{p} \card{\cH_i\cap s}$ 
used in the algorithm. 


Now suppose $ \|s\circ v\|_0 \ge 2r =\tfrac{2U}{p}$.
Then by Lemma~\ref{lem:qv-pairwise}, the expectation is $\ex[\tfrac{1}{p}\norm{\xi^i\circ s\circ v}_0] = \norm{s\circ v}_0$,
and with probability at least $8/9$, 
\[
 \tfrac{1}{p}\norm{\xi^i\circ s\circ v}_0 
 \ge \norm{s\circ v}_0  - 3 \big(\tfrac{1-p}{p} \|s\circ v\|_0 \big)^{1/2} 
 \ge \norm{s\circ v}_0  - \tfrac{3}{(2U)^{1/2}} \|s\circ v\|_0 
 \ge \tfrac34 \norm{s\circ v}_0
 \ge \tfrac{3U}{2p}. 
\]
In this event, 
$\norm{\xi^i\circ s\circ v}_0 \ge \tfrac{3U}{2} > U$,
which by Lemma~\ref{lemma:alg-sampler} implies that 
$\card{\cH_i\cap s} \ge U$,
and therefore the estimate obtained from the instance $\cA_i$ is
$\tfrac{1}{p} |\cH_i\cap s| \ge \tfrac{1}{p} U\ge r$.
By a standard probability amplification argument, i.e., Chernoff bound, 
with probability at least $1-O(1/\log^2 n)$, 
the median of $t$ independent repetitions is $\bar z \ge r$, 
and the output is $1$.

Suppose next that $ \|s\circ v\|_0 \le \tfrac{r}{2} = \tfrac{U}{2p}$.
Then by Lemma~\ref{lem:qv-pairwise}, 
with probability at least $8/9$, 
\[
 \tfrac{1}{p}\norm{\xi^i\circ s\circ v}_0 
 \le \norm{s\circ v}_0  + 3 \big(\tfrac{1-p}{p} \|s\circ v\|_0 \big)^{1/2} 
 \le \tfrac{U}{2p} + \tfrac{3}{p} (\tfrac{U}{2})^{1/2} 
 \le \tfrac{3U}{4p}. 
\] 
In this event, 
$ \norm{\xi^i\circ s\circ v}_0 \le \tfrac{3U}{4}$
which by Lemma~\ref{lemma:alg-sampler} implies that 
$\cH_i\cap s = \supp(\xi^i\circ s\circ v)$, 
and therefore the estimate obtained from the instance $\cA_i$ is
$\tfrac{1}{p} |\cH_i\cap s| 
  = \tfrac{1}{p} \norm{\xi^i\circ s\circ v}_0
  \le \tfrac{3U}{4p}
  < r$.
By a standard probability amplification argument, i.e., Chernoff bound, 
with probability at least $1-O(1/\log^2 n)$, 
the median of $t$ independent repetitions is $\bar z < r$, 
and the output is $0$.
\end{proof}

Using Algorithm~\ref{alg:const-detector} we can now easily design 
an algorithm achieving $O(1)$-approximation. 

\begin{lemma}[$8$-approximation algorithm]
\label{lemma:const approx}
There is an algorithm that when run
for $\cS\subset\zo^n$ with parameter $r\in [\tfrac14, n]$
on an insertion-only stream accumulating to $v\in \RR^n$, 
makes one pass and uses $O(\log n\cdot \log\log n\cdot \HHdim(\cS))$ words of space,
and then when queried for $s\in \cS$, 
it output a number $z$ that with probability at least $0.99$ satisfies 
$\|s\circ v\|_0 < z < 8 \|s\circ v\|_0$.
\end{lemma}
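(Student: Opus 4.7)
The plan is to run Algorithm~\ref{alg:const-detector} (Constant-Detector) in parallel for a geometric sequence of parameters $r_j = 2^{j-2}$ for $j=0,1,\ldots,J$, with $J=\Theta(\log n)$ chosen so that $r_J \ge 2n$; this uses $\Theta(\log n)$ instances which between them cover every possible value of $T\eqdef\|s\circ v\|_0 \in \{0,1,\ldots,n\}$. The space bound is then immediate from Lemma~\ref{lemma:const-detector}: each instance uses $O(\log\log n\cdot\HHdim(\cS))$ words, so the total across all $\Theta(\log n)$ instances is $O(\log n\cdot\log\log n\cdot\HHdim(\cS))$, matching the claim.

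For a query $s\in\cS$, by Lemma~\ref{lemma:const-detector} each instance with parameter $r_j$ correctly reports $0$ whenever $r_j\ge 2T$ and reports $1$ whenever $r_j\le T/2$, each with probability at least $1-O(1/\log^2 n)$. A union bound across the $\Theta(\log n)$ instances shows they are all simultaneously correct with probability at least $1-O(1/\log n)$, which can be boosted to $0.99$ uniformly in $n$ by enlarging the internal repetition count in Constant-Detector by a constant factor; this keeps the per-instance space at $O(\log\log n \cdot \HHdim(\cS))$.

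Conditioned on this good event, let $j^*$ be the smallest index for which the $j^*$-th instance produces output $0$. Since instance $j^*$ did not output $1$, we cannot have $r_{j^*}\le T/2$, so $r_{j^*}>T/2$; and since instance $j^*-1$ produced output $1$, we cannot have $r_{j^*-1}\ge 2T$, so $r_{j^*-1}<2T$, i.e.\ $r_{j^*}<4T$. Thus $r_{j^*}\in(T/2,4T)$, and returning $z\eqdef 2r_{j^*}$ yields $T<z<8T$ as required.

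The only delicate point, and what I would flag as the sole obstacle, is handling very small $T$: when $T=0$ the stated guarantee $\|s\circ v\|_0<z<8\|s\circ v\|_0$ is vacuous and can be met by any convention (e.g.\ returning $z=0$ when the smallest-rate instance records an empty sample), while $T\ge 1$ forces $r_0=1/4\le T/2$ so the base instance outputs $1$ with high probability and $j^*\ge 1$ is well-defined. There is no substantive technical barrier; the proof is essentially the geometric grid of Constant-Detector instances together with a union bound over them.
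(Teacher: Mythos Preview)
Your proposal is correct and follows essentially the same approach as the paper: run Constant-Detector at the geometric grid $r_j=2^{j-2}$, union bound over the $\Theta(\log n)$ instances, and take $j^*$ to be the first index reporting $0$, then read off the $8$-approximation from the two contrapositives. The paper outputs $z=2^{j^*}$ rather than your $z=2r_{j^*}$, and it handles the $T=0$ case by observing that $x_0=0$ iff $\|s\circ v\|_0=0$ (exactly your ``base instance'' remark), but these are cosmetic differences.
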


\begin{proof}
The algorithm consists of $l=O(\log n)$ parallel independent instances
of Algorithm~\ref{alg:const-detector}, denoted $\cB_0, \cB_1,\ldots, \cB_{l}$. 
For each instance $\cB_j$, the parameter is $r_j = 2^{j-2}$.
To process a query $s\in \cS$, the algorithm queries every instance for this $s$.
Let $x_j$ denote the output from instance $\cB_j$. 
By Lemma~\ref{lemma:const-detector}, 
$x_0 = 0$ if and only if $\|s\circ v\|_0=0$.
Hence, $x_0$ can be used to distinguish whether $\|s\circ v\|_0=0$,
and we may assume henceforth that $\|s\circ v\|_0>0$.

The algorithm computes $j^*$ which is the smallest $j\ge 1$ such that $x_j = 0$,
and outputs $z=2^{j^*}$.
By Lemma~\ref{lemma:const-detector} and a union bound, 
with probability at least $0.99$, for all $j=1,\ldots,l$, 
if $\|s\circ v\|_0 \ge 2r_j = 2^{j}$ then $x_j =1$, and 
if $\|s\circ v\|_0 \le r_{j}/2 = 2^{j-2}$ then $x_{j} =0$. 
Assuming this event happens, 
and by applying the first condition to $j^*$ and the second one to $j^*-1$ 
(both in the contrapositive form), we obtain
$ 2^{(j^*-1)-2} < \|s\circ v\|_0 < 2^{j^*} = z$.

It is easy to verify the algorithm's space requirement,
and this completes the proof.
\end{proof}

We now design an $(1\pm\epsilon)$-approximation algorithm, 
by using the above $O(1)$-approximation algorithm 
and a variant of Algorithm~\ref{alg:single-estimator}. 

\begin{algorithm}
\caption{Multiplicative Approximation\label{alg:eps-approx} for $\cS\subset \zo^n$}
\begin{algorithmic}[1]
  \State \textbf{Input:} $\epsilon\in (0,1)$, an insertion-only stream $\langle a_1, \ldots, a_m\rangle$, where each item $a_j\in [n]$ 
  \State \textbf{Initialize:} 
  \State  \quad $t\gets \Theta(\log n)$
  \State  \quad let $\set{\cA_i}_{i=1,\ldots,t}$ be instances of Bounded-Sampler with parameters $p_i = 2^{1-i}$ and $U =\lceil 400\epsilon^{-2}\rceil$
  \State  \quad let $\cB$ be an $8$-approximation algorithm (from Lemma~\ref{lemma:const approx})
  \State\textbf{Update($a_j$):}
  \State \quad Update $\cB$ and each $\cA_i$ with $a_j$
  \State \textbf{Query($s\in \cS$):}   
  \State \quad query $\cB$ for $s$ and let $z$ be its output 
  \State \quad $l\gets \max(1, \lceil\log(z\epsilon^{2}/100)\rceil )$;
  \State \quad query $\cA_l$ for $s$ and let $\cH_l$ be its output 
  \State \quad \Return $\tfrac{1}{p_l} |\cH_l\cap s|$;
\end{algorithmic}
\end{algorithm}

\begin{theorem}
\label{thm:alg-main}
Consider Algorithm~\ref{alg:eps-approx}
for $\cS\subset\zo^n$ with parameter $\epsilon\in (0,1)$.
When run on an insertion-only stream accumulating to $v\in \RR^n$, 
it makes one pass and uses 
$O((\epsilon^{-2} + \log\log n)\log n \cdot \HHdim(\cS))$ words of space, 
and then when queried for $s\in \cS$, its output $\wh{z}(s)$ satisfies
\[
  \forall s\in\cS, \qquad
  \Pr\Big[\wh{z}(s) \in (1\pm\epsilon) \|s\circ v\|_0 \Big] 
  \ge 0.8.
\] 
\end{theorem}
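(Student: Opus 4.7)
The plan is straightforward once the pieces are laid out: the algorithm uses the $8$-approximation $z$ from Lemma~\ref{lemma:const approx} to pick a sampling rate $p_l = 2^{1-l}$ whose product with $\|s\circ v\|_0$ lies in a safe window of size $\Theta(\epsilon^{-2})$, and then estimates $\|s\circ v\|_0$ by $\tfrac{1}{p_l}|\cH_l\cap s|$, where $\cH_l$ is produced by the Bounded-Sampler $\cA_l$. Chebyshev's inequality (Lemma~\ref{lem:qv-pairwise}) gives the $(1\pm\epsilon)$ guarantee, provided we verify that the sampled subset is small enough that $|\cH_l\cap s|$ actually equals $\|\xi^l\circ s\circ v\|_0$ via Lemma~\ref{lemma:alg-sampler}.

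\emph{Space.} The algorithm runs $t=\Theta(\log n)$ instances of Algorithm~\ref{alg:single-estimator} with budget $U=\Theta(\epsilon^{-2})$, plus one copy of the $8$-approximation algorithm. By Lemma~\ref{lemma:alg-sampler} each sampler costs $O(\epsilon^{-2}\HHdim(\cS))$ words, and by Lemma~\ref{lemma:const approx} the oracle $\cB$ costs $O(\log n\log\log n\cdot \HHdim(\cS))$ words. Summing yields the claimed $O((\epsilon^{-2}+\log\log n)\log n\cdot \HHdim(\cS))$ bound.

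\emph{Correctness.} Condition on the $0.99$-event from Lemma~\ref{lemma:const approx} that $\|s\circ v\|_0 < z < 8\|s\circ v\|_0$. From the definition $l=\max(1,\lceil\log(z\epsilon^2/100)\rceil)$ a direct computation shows $p_l \in (100/(z\epsilon^2),\, 400/(z\epsilon^2)]$ in the nontrivial case and $p_l=1$ otherwise, so
\[
\mu \eqdef p_l\|s\circ v\|_0 \in \big[ 12.5/\epsilon^2,\ 400/\epsilon^2 \big].
\]
If $p_l=1$, then $\|s\circ v\|_0 < z \le 100/\epsilon^2 \le U$, so Lemma~\ref{lemma:alg-sampler} gives $\supp(s\circ v)\subset \cH_l$, hence $|\cH_l\cap s|=\|s\circ v\|_0$ exactly. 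If $p_l<1$, apply Lemma~\ref{lem:qv-pairwise} to $s\circ v$ at rate $p_l$: with probability at least $8/9$,
\[
\big|\tfrac{1}{p_l}\|\xi^l\circ s\circ v\|_0 - \|s\circ v\|_0\big| \le 3\sqrt{\|s\circ v\|_0/p_l}
\quad\text{and}\quad
\|\xi^l\circ s\circ v\|_0 \le \mu + 3\sqrt{\mu} \le U.
\]
The first bound combined with $\mu\ge 9/\epsilon^2$ forces relative error at most $\epsilon$; the second lets Lemma~\ref{lemma:alg-sampler} conclude $\supp(\xi^l\circ s\circ v)\subset \cH_l$, which together with the trivial reverse inclusion yields $|\cH_l\cap s|=\|\xi^l\circ s\circ v\|_0$. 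Hence the reported $\tfrac{1}{p_l}|\cH_l\cap s|$ lies in $(1\pm\epsilon)\|s\circ v\|_0$. A union bound over the failures of $\cB$ and of Chebyshev gives success probability at least $0.99 - 1/9 \ge 0.8$.

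The main subtlety I expect is calibrating the constants: the choice $100$ in the definition of $l$ and $U=400/\epsilon^2$ have to simultaneously make $\mu$ large enough that Chebyshev delivers relative error $\epsilon$, yet small enough that $\mu + 3\sqrt{\mu} \le U$ so Lemma~\ref{lemma:alg-sampler} guarantees the full sampled support is retained in $\cH_l$. Verifying both inequalities in the same window is the only genuinely arithmetic step; everything else is bookkeeping on top of Lemmas~\ref{lem:qv-pairwise}, \ref{lemma:alg-sampler}, and~\ref{lemma:const approx}.
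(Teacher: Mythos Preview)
Your proof follows essentially the same route as the paper's: use the $8$-approximation to select a level $l$, bound $p_l\|s\circ v\|_0$ in a $\Theta(\epsilon^{-2})$ window, apply Lemma~\ref{lem:qv-pairwise} for concentration, and then invoke Lemma~\ref{lemma:alg-sampler} once you know the sampled support has size at most $U$. One small arithmetic slip: from $l=\lceil\log(z\epsilon^2/100)\rceil$ you get $p_l=2^{1-l}\in(100\epsilon^{-2}/z,\,200\epsilon^{-2}/z]$, not $400\epsilon^{-2}/z$; with the correct upper bound $\mu\le 200\epsilon^{-2}$ your inequality $\mu+3\sqrt{\mu}\le U=\lceil 400\epsilon^{-2}\rceil$ does hold, whereas with $\mu\le 400\epsilon^{-2}$ it would fail at the endpoint.
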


\begin{proof}
We assume $\|s\circ v\|_0>0$, since otherwise the algorithm specified in Lemma~\ref{lemma:const approx} already gives the correct answer.
Next, by Lemma~\ref{lemma:const approx}, with probability at least $0.99$, 
the value $z$ reported by $\cB$ is an $8$-approximation to $\|s\circ v\|_0$,
and for the rest of the proof we assume this event happens. 
Consider $l$ as in the algorithm, 
and its corresponding $p_l = 2^{1-l} = \min(1,200\epsilon^{-2}/z)$, 
then
\[
  \min(1, 25\epsilon^{-2}/\|s\circ v\|_0)
  \le p_l 
  \le \min(1, 200\epsilon^{-2}/\|s\circ v\|_0).
\]
Let $Z_l\subset \supp(s\circ v)$ be the set of indices sampled 
in algorithm $\cA_l$ by the hash function $\xi$, 
and recall it samples each index in $\supp(s\circ v)$ 
with probability $p_l$ pairwise independently.
By Lemma~\ref{lem:qv-pairwise}, with probability at least $0.8$, 
\[
  \left| \tfrac{1}{p_l} |Z_l| - \norm{s\circ v}_0 \right|
  \le 3 (\tfrac{1-p_l}{p_l} \|s\circ v\|_0)^{1/2} 
  \le \epsilon \|s\circ v\|_0 ,
\]
which implies that 
$
  |Z_l| \le (1 + \epsilon) p_l \|s\circ v\|_0 \le  400\epsilon^{-2} = U 
$. 
When this happens, by Lemma~\ref{lemma:alg-sampler} 
instance $\cA_l$ of Algorithm~\ref{alg:single-estimator} 
outputs $\cH_l$ that satisfies $\cH_l\cap s = Z_l$,
and we conclude that our algorithm's output is
$\tfrac{1}{p_l}|\cH_l\cap s| 
  = \tfrac{1}{p_l} |Z_l| 
  \in (1\pm\epsilon) \|s\circ v\|_0
$.

The proof of Theorem~\ref{thm:alg-main} is completed  
by easily verifying the space usage of the algorithm. 
\end{proof}

\subsection{Matching Lower Bound}
\label{sec:SubsetLpLB}

We prove that for every set system $\cS\subset\zo^n$, 
the space complexity of every universal streaming algorithm 
must be $\Omega(\HHdim(\cS))$, 
which matches Theorem~\ref{thm:alg-main} 
in terms of the linear dependence on the heavy-hitter dimension.

\begin{theorem}
\label{thm:lower-bound-main}
Let $\cS\subset \{0, 1\}^n$ be a non-empty set system. 
Suppose $\cA$ is a (randomized) one-pass streaming algorithm 
that solves the Subset-$\ell_p$ problem for $\cS$
within approximation factor $\alpha \ge 1$ for some $p\ge 0$. 
Then $\cA$ requires $\Omega(\HHdim(\cS))$ bits of space 
for some insertion-only stream input.
Moreover, if $\alpha = 1+\epsilon$ 
for some $\epsilon \geq 1/\sqrt{\max_{s\in \cS} \|s\|_0 }$ and $p\neq 1$, 
then $\cA$ requires $\Omega(\HHdim(\cS)+\epsilon^{-2})$ bits of space.  
\end{theorem}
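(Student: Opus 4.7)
The plan is to establish both bounds via one-way communication complexity reductions. For the main $\Omega(\HHdim(\cS))$ bound I reduce from the one-way \textsc{Index} problem on $\zo^d$ with $d=\HHdim(\cS)$, whose randomized complexity is $\Omega(d)$. Using Lemma~\ref{lem:HHdimPermuation} (equivalently, unpacking Definition~\ref{defn:HHdim}), fix distinct coordinates $i_1,\ldots,i_d\in[n]$ and sets $s^1,\ldots,s^d\in\cS$ forming a permutation submatrix of the incidence matrix, so that $s^k\cap\{i_1,\ldots,i_d\}=\{i_k\}$ for every $k\in[d]$. Given an \textsc{Index} instance in which Alice holds $x\in\zo^d$ and Bob holds $j\in[d]$, Alice runs $\cA$ on the insertion-only stream that contains one update to coordinate $i_k$ for each $k$ with $x_k=1$, and transmits $\cA$'s memory state to Bob. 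The resulting frequency vector $v=\sum_{k:x_k=1}e_{i_k}$ satisfies $v\circ s^j=x_j\cdot e_{i_j}$, so $\|v\circ s^j\|_p=x_j$ for every $p\ge 0$.

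Bob then queries $\cA$ on $s^j$; any $\alpha$-multiplicative estimator $E(s^j)$ is zero if and only if $\|v\circ s^j\|_p=0$, so Bob recovers $x_j$ with probability at least $0.9$. This yields the $\Omega(d)=\Omega(\HHdim(\cS))$ bound for arbitrary $\alpha\ge 1$ and $p\ge 0$. For the refined statement with $\alpha=1+\epsilon$, $\epsilon\ge 1/\sqrt{\max_{s\in\cS}\|s\|_0}$, and $p\ne 1$, the $\Omega(\HHdim(\cS))$ term is immediate from the reduction above. To obtain the additional $\Omega(\epsilon^{-2})$ term, fix $s^*\in\cS$ with $\|s^*\|_0\ge \epsilon^{-2}$ (which exists by the hypothesis on $\epsilon$) and restrict attention to streams whose updates lie in the coordinates of $s^*$; on these streams $\|v\circ s^*\|_p=\|v\|_p$, so a $(1+\epsilon)$-approximation of the subset norm indexed by $s^*$ is exactly a $(1+\epsilon)$-approximation of the $\ell_p$-norm of a vector effectively of dimension $\|s^*\|_0\ge\epsilon^{-2}$. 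Invoking the standard $\Omega(\epsilon^{-2})$ one-pass lower bound for $(1+\epsilon)$-approximating $\ell_p$ (via Gap-Hamming-type reductions, see~\cite{iw03,w04} for $p\in(0,1)\cup(1,\infty)$ and~\cite{knw10a} for $p=0$) and combining with the $\Omega(\HHdim(\cS))$ bound yields $\Omega(\HHdim(\cS)+\epsilon^{-2})$, since the maximum of two nonnegative quantities is at least half their sum.

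The main obstacle I anticipate is translating an arbitrary real-valued heavy-hitter witness $v\in\RR^n$ into a $\zo$-valued encoding compatible with insertion-only updates; this is exactly what Lemma~\ref{lem:HHdimPermuation} provides, by guaranteeing a $d\times d$ permutation submatrix so that the support vector of any chosen subset of $\{i_1,\ldots,i_d\}$ serves as a witness. A secondary subtlety is that the cited $\epsilon^{-2}$ lower bounds must hold for all $p\ne 1$ (for $p=1$ in the insertion-only model, $\|v\|_1$ is exactly a counter and no $\epsilon^{-2}$ term should appear, which is consistent with the hypothesis $p\ne 1$). Beyond these two points, the argument is a routine combination of two standard reductions.
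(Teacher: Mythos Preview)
Your proposal is correct and follows essentially the same approach as the paper: a reduction from one-way \textsc{Index} on $d=\HHdim(\cS)$ bits using the heavy-hitter witnesses, followed by invoking the standard $\Omega(\epsilon^{-2})$ lower bound for $\ell_p$-approximation on a single large set $s^*\in\cS$. Your version is in fact slightly more explicit than the paper's, in that you invoke the permutation-submatrix characterization (Lemma~\ref{lem:HHdimPermuation}) to fix the coordinates $i_1,\ldots,i_d$ and sets $s^1,\ldots,s^d$, and you spell out why the hypothesis $\epsilon\ge 1/\sqrt{\max_{s\in\cS}\|s\|_0}$ is needed for the $\epsilon^{-2}$ term.
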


\begin{proof}
We begin with the lower bound for $\Omega(\HHdim(\cS))$.
Let $k = \HHdim(\cS)\ge 1$, then there exists a vector $v\in \RR^{n}$ such that 
$|H(S,v)| \ge k$. 
Without loss of generality, we can assume $v\in \{0,1\}^n$ since replacing each non-zero coordinate of $v$ with $1$ does not change $D(s,v)$ for any $s\in \{0,1\}^n$. 

Since $\cS$ is given to the algorithm before the streaming coming, we can use the vector $v$ and the algorithm $\cS$ to design a one-way communication protocol that solves INDEX$(k)$, in which Alice is holding a binary vector $x\in \{0,1\}^k$ of dimension $k$ and Bob is holding index $i\in[k]$.
Alice needs to send one-round of message to Bob. Bob needs to figure out the $i$-th coordinate in Alice's string. 
It is well-known that any protocol with at least constant probability of success requires $\Omega(k)$ bits (e.g. \cite{knr99}).

We now describe the protocol for the INDEX problem using the algorithm for the subset-$\ell_p$ problem.
Firstly, since $|H(S, v)|\ge k$, there exists $s_1, s_2, \ldots, s_k \in \cS$ such that each $\|s_j\circ v\|_0 = 1$
and  $D(s_j, v)$s are disjoint.
Therefore, each $s_j$ uniquely picks up a coordinate in $v$.
We denote the non-zero coordinate of $s_j\circ v$ as $z_j$.
Alice and Bob (without communication) then map the $j$-th element in $[k]$ to the $z_j$-th coordinate in $v$.
Alice then modifies $v$ such that $v_{z_j} = x_j$ for each $j$.
As such, Alice obtains a vector $v'$. 
She then converts the vector $v'$ to a insertion-only stream and runs the algorithm $\cA$ on vector $v$ and sends the memory content to Bob. 
Bob recovers the instance of the algorithm $\cA$ and runs a query on approximating the $\ell_p$ norm of vector $v'\circ s_i$.
Since $\|v'\circ s_i\|_p = x_i$, Bob can recover a answer for the INDEX problem from any $\alpha$-multiplicative approximation of $\|v'\circ s_i\|_p$. 
Thus algorithm $\cA$ must use $\Omega(k)$ bits of space in the worst case.

Lastly, the $\Omega(\epsilon^{-2})$  lower bound follows from the standard lower bound for $\ell_p$ norm \cite{iw03}.
\end{proof}

\begin{remark}
For every $p\le 2$, the lower bound $\Omega(\HHdim(\cS) + \epsilon^{-2})$ 
is existentially tight up to $\polylog(n)$-factor.
Indeed, the system $\cS^*=\{e_1,\ldots,e_{k-1}, [k,n]\}$
has $\HHdim(\cS^*) =\card{\cS^*}= k$,
but admits an algorithm with space usage $O(\HHdim(\cS^*) + 1/\epsilon^2)$ 
by explicitly storing the first $k-1$ coordinates 
and running an $\ell_p$-norm algorithm for the coordinates-subset $[k,n]$.
Thus, Theorem~\ref{thm:lower-bound-main} 
provides the best-possible dependence on $\epsilon$.
\end{remark}

\begin{remark}
Some set systems $\cS\subset\zo^n$ admit a stronger lower bound hand
of $\Omega(\HHdim(\cS)\cdot \epsilon^{-2})$.
Indeed, consider a set system with $\HHdim(\cS)$ disjoint subsets
where each subset has cardinality $\epsilon^{-2}$,
then a lower bound follows by a reduction from $\HHdim(\cS)$ 
independent instances of Gap Hamming Distance, 
and apply \cite{ACKQWZ16}, which is based on \cite{CR12,BGPW13}.
Thus, the space complexity in Theorem~\ref{thm:alg-main} 
provides the best-possible dependence on $\epsilon$.
\end{remark}

\subsection{Strong Lower Bounds for Turnstile and Sliding-Window Models}
\label{sec:SubsetLpTurnstile}

We now show an impossibility result for the Subset-$\ell_p$ problem 
in richer data streams, namely, the strict turnstile and sliding-window models.
Specifically, we exhibit a family of subsets that has a small heavy-hitter dimension but does not admit efficient (nontrivial) streaming algorithms 
in those richer data streams. 
This shows a strong separation from the insertion-only model.

Recall that in the turnstile model, the stream contains additive updates
to a vector $v\in\RR^n$, which is initialized to all-zeros. 
As the updates may be negative, it captures both insertions and deletions. 
In the strict turnstile model, the coordinates of $v$ must remain
non-negative at all times. 
For an even $n$, 
let $\cSint\subset 2^{[n]}$ be the family of all intervals of length $n/2$ 
(and thus cardinality $n/2+1$), i.e., 
\[
  \cSint \eqdef \big\{ [a, a+n/2]:\ a=1,\ldots,n/2 \big\}.
\]
We next show that $\cSint$ has a small heavy-hitter dimension
(much smaller than its cardinality $|\cSint| = n/2$), 
and thus admits efficient algorithms for insertion-only streams.

\begin{proposition}
$h(\cSint)\le 3$.
\end{proposition}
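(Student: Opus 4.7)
The plan is to prove the bound by contradiction: assuming three heavy hitters exist, one derives a pigeonhole-style impossibility about fitting two interval witnesses. Concretely, suppose for some $v\in\RR^n$ we had three distinct indices $i_1<i_2<i_3$ in $H(\cSint,v)$. By Definition~\ref{defn:HHdim}, each $i_j$ admits a witness interval $I_j = [a_j,a_j+n/2]\in\cSint$ with $\supp(I_j\circ v)=\{i_j\}$, meaning $I_j\cap\supp(v)=\{i_j\}$ and in particular $I_j$ excludes the other two heavy hitters.

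The main step is to look only at the witnesses $I_1$ and $I_3$ and observe how $i_2$ constrains both. Since $i_2,i_3\notin I_1$ and both lie to the right of $i_1$, the right endpoint of $I_1$ satisfies $a_1+n/2<i_2$, so $I_1\subseteq[1,i_2-1]$. Symmetrically, $i_1,i_2\notin I_3$ forces $a_3>i_2$ and hence $I_3\subseteq[i_2+1,n]$. Each interval in $\cSint$ has exactly $n/2+1$ elements, so fitting $I_1$ inside a prefix of size $i_2-1$ yields $i_2\ge n/2+2$, while fitting $I_3$ inside a suffix of size $n-i_2$ yields $i_2\le n/2-1$. These inequalities are incompatible, so no such triple exists, proving $\HHdim(\cSint)\le 2\le 3$.

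There is essentially no obstacle here: the entire proof is a one-step pigeonhole on interval placement once one isolates the middle heavy hitter $i_2$. The bound stated in the proposition is in fact loose, since the same argument already yields $\HHdim(\cSint)\le 2$; I would mention this sharper bound as a remark but record the statement as $\le 3$ since that is all that is needed downstream (the set system $\cSint$ is used only as an example of a small-dimension family for which the strong turnstile/sliding-window lower bounds will be established).
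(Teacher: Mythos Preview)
Your proof is correct, and in fact yields the sharper bound $\HHdim(\cSint)\le 2$, as you note. The approach, however, differs from the paper's. The paper argues by a density count: each heavy hitter $j$ comes with a witness interval of size $n/2+1$ in which $j$ is the only nonzero coordinate, so roughly $n/2$ zero coordinates surround each heavy hitter; dividing $n$ by this yields at most $n/(n/2-1)\le 3$ heavy hitters. Your argument instead isolates the middle index $i_2$ of a hypothetical triple and shows the witness intervals $I_1$ and $I_3$ force incompatible constraints $i_2\ge n/2+2$ and $i_2\le n/2-1$.

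The trade-off: the paper's counting argument generalizes immediately to intervals of length $\ge k$ (giving $\HHdim=\Theta(n/k)$, as in Table~\ref{tab:hhdim}), whereas your pigeonhole-on-the-middle-element argument is tailored to the case $k=n/2$ but is tighter there and entirely self-contained. For the purpose at hand---exhibiting a constant-dimension family for the turnstile/sliding-window lower bounds---either suffices, and your observation that the true value is $2$ (attained by $v=e_1+e_n$) is a nice sharpening.
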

\begin{proof}
For any vector $v\in \RR^n$, if for some $j\in [n]$ and $s\in \cSint$ we have $s\circ v = (0,0,\ldots,0,v_j,0, \ldots, 0)$, then there are at least $(n/2-1)$ $0$s come around coordinate $j$ in $v$.
Therefore, the total number singletons in $v$ can be at most $n/(n/2-1) \le 3$.
\end{proof}

We now show a strong space lower bound for every streaming algorithm 
in the turnstile model. 

\begin{theorem}
\label{thm:turnstile-lower-bound}
Suppose $\cA$ is a (randomized) one-pass streaming algorithm 
that solves the Subset-$\ell_p$ problem for $\cSint$
within approximation factor $\alpha \ge 1$ for some $p\ge 0$.
Then for some turnstile stream input, 
$\cA$ requires $\Omega(n)$ bits of space. 
\end{theorem}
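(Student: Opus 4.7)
The plan is to reduce from the Augmented Index problem, for which \cite{BJKK04b} establishes an $\Omega(N)$ one-round randomized communication lower bound. Recall the setup: Alice holds $x \in \zo^N$, Bob holds both an index $j \in [N]$ and the suffix $x_{j+1}, \ldots, x_N$, and Bob must recover $x_j$ from a single message sent by Alice. I would instantiate this with $N \eqdef n/2$, so that the resulting bound becomes $\Omega(n)$.

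First, Alice converts her string into a turnstile stream by inserting $(i, +1)$ whenever $x_i = 1$, for $i \in [N]$; this produces a frequency vector with $v_i = x_i$ for $i \le N$ and $v_i = 0$ for $i > N$. She runs $\cA$ on this prefix of the stream and transmits the memory state to Bob. Bob then extends the same stream using his suffix knowledge: for every $i \in \{j+1, \ldots, N\}$ with $x_i = 1$ he appends the deletion $(i, -1)$. After Bob's updates the vector $v$ is supported inside $\{1, \ldots, j\}$ and still satisfies $v_j = x_j$, and since all coordinates remain in $\{0,1\}$ throughout, the reduction is valid even in the strict turnstile model. Finally, Bob queries $\cA$ on the interval $s = [j,\, j + n/2]$, which lies in $\cSint$ precisely because its left endpoint $j$ belongs to $\{1, \ldots, n/2\}$. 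The interval $s$ intersects $\supp(v) \subseteq \{1,\ldots,j\}$ only at the single coordinate $j$, so $\|v \circ s\|_p = |x_j| \in \{0,1\}$, uniformly in $p \ge 0$.

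Any multiplicative $\alpha$-approximation $E(s)$ then satisfies $E(s) = 0$ iff $x_j = 0$, so Bob recovers $x_j$ with the same success probability as $\cA$. The total one-way communication is at most the space used by $\cA$ plus an additive $O(\log n)$ to transmit $j$ and identify the query, so the Augmented Index bound yields $\Omega(N) = \Omega(n)$ bits for $\cA$. The main obstacle, and the only geometrically delicate point, is finding an interval in $\cSint$ that isolates coordinate $j$ among $\{1, \ldots, j\}$: the choice $[j, j+n/2]$ works because it contains $j$ but misses $\{1, \ldots, j-1\}$, and it is this admissibility constraint — requiring $j \le n/2$ — that forces the reduction to use Augmented Index over inputs of length $N = n/2$ rather than $N = n$.
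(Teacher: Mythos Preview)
Your proof is correct and follows essentially the same route as the paper: a reduction from Augmented Index on $n/2$ bits, where Alice inserts the nonzero coordinates of $x$, Bob deletes the known suffix, and then queries the interval $[j,\,j+n/2]\in\cSint$ to isolate coordinate $j$. The only superfluous remark is the ``additive $O(\log n)$ to transmit $j$'' --- Bob already holds $j$ in the Augmented Index setup --- but this does not affect the argument.
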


Before proving the theorem, we recall a known communication lower bound,
which was introduced and proved in \cite{BJKK04b},
following the classical Index problem from \cite{m98}.
In the \emph{Augmented Index} problem, denoted $\mathrm{AUG}_n$, 
Alice holds a binary vector $x\in \{0,1\}^n$,
and Bob holds an index $j\in[n]$ 
and a sequence $x_{j+1},\ldots, x_n\in\zo$ (part of Alice's vector). 
Alice then sends a single round of message to Bob,
who is required to output $x_j$. 
  
\begin{theorem}[Lower Bound for Augmented Index \cite{BJKK04b}]
In every shared-randomness protocol for $\mathrm{AUG}_n$
(with success probability at least $0.9$),
Alice must send $\Omega(n)$ bits.
\end{theorem}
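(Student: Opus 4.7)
The plan is to prove the lower bound by an information-theoretic argument under a hard input distribution, namely uniform $X=(X_1,\ldots,X_n)\in\zo^n$, independent of the shared random string $R$. By Yao's minimax principle, it suffices to show that, under this product distribution, Alice's message $M=M(X,R)$ satisfies $H(M)=\Omega(n)$, which then forces the worst-case message length to be $\Omega(n)$. The core idea is to lower bound, for each coordinate $j$, the information $M$ must reveal about $X_j$ given Bob's suffix $X_{>j}\eqdef(X_{j+1},\ldots,X_n)$, and then sum these by telescoping the chain rule of mutual information in reverse order.

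For each fixed $j\in[n]$ the protocol succeeds on every input with probability at least $0.9$, so in particular under uniform $X$ the error probability of Bob's estimator for $X_j$ computed from $(M,X_{>j},R)$ is at most $0.1$. Since $X_j\mid X_{>j}$ is a uniform bit, Fano's inequality applied to the conditional distribution of $X_j$ given $(M,X_{>j},R)$ yields
\[
  H(X_j\mid M,X_{>j},R)\;\le\;h(0.1),
\]
where $h(\cdot)$ is the binary entropy function, and hence $I(X_j;M\mid X_{>j},R)\ge 1-h(0.1)=\Omega(1)$.

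Telescoping the chain rule in the order $j=n,n-1,\ldots,1$, and using $R\perp X$ for the first inequality below, we obtain
\[
  H(M)\;\ge\;H(M\mid R)\;\ge\;I(X;M\mid R)\;=\;\sum_{j=1}^{n} I(X_j;M\mid X_{>j},R)\;\ge\;\Omega(n),
\]
so Alice must send $\Omega(n)$ bits. The main technical obstacle is the careful handling of shared randomness in combination with the conditional form of Fano's inequality: Bob's per-input success guarantee must be translated into a conditional error bound over the joint law of $(X_j,M,X_{>j},R)$, which requires invoking Fano pointwise after fixing realizations of $(X_{>j},R)$ and then integrating; this goes through cleanly thanks to the independence of the coordinates of $X$ and of $X$ from $R$, but would become quantitatively delicate if the success probability were only $1/2+\delta$ for small $\delta$, in which case Fano would yield only $\Omega(\delta^2)$ bits per coordinate and the telescoping would have to be performed with more care.
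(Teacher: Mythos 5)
Your argument is correct, but note that the paper does not prove this statement at all: it is imported as a black-box citation to \cite{BJKK04b}, so there is no ``paper proof'' to compare against. What you have written is the standard information-theoretic proof of the Augmented Index lower bound, and it goes through: for uniform $X$ independent of $R$, the per-input $0.9$ success guarantee gives $\Pr[\hat X_j\neq X_j]\le 0.1$ where $\hat X_j$ is Bob's (deterministic, given $R$) function of $(M,j,X_{>j},R)$; conditional Fano then gives $H(X_j\mid M,X_{>j},R)\le h(0.1)$, while independence gives $H(X_j\mid X_{>j},R)=1$, so each term $I(X_j;M\mid X_{>j},R)\ge 1-h(0.1)$; the reverse-order chain rule and $H(M)\ge H(M\mid R)\ge I(X;M\mid R)$ finish the count, and a message ensemble with entropy $\Omega(n)$ forces some message of length $\Omega(n)$. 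Two small remarks: the appeal to Yao's minimax principle is unnecessary (and slightly misleading, since you never pass to a deterministic protocol) --- your argument already handles the randomness directly via conditioning on $R$; and your closing caveat is apt, since the paper only needs the constant-success-probability regime where Fano yields $\Omega(1)$ bits per coordinate.
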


We are now ready to prove our theorem.

\begin{proof}[Proof of Theorem~\ref{thm:turnstile-lower-bound}]
Suppose we have an algorithm $\cA$ that solves, with $\alpha$-approximation,
the Subset-$\ell_p$ problem of $\cSint$.
We now describe a protocol that solves the $\mathrm{AUG}_{n/2}$ problem.
Now Alice has a binary vector $x\in\{0,1\}^{n/2}$, and Bob has an index $j$ and $x_{j+1}, x_{j+2}, \ldots x_{n/2}$.
Alice treats her vector $x$ as a stream of updates and feeds it to $\cA$. 
She then sends the memory content of $\cA$ to Bob.
Bob continues running $\cA$ based on the memory content received from Alice. 
He then send the $-x_{j+1}, -x_{j+2}, \ldots, -x_{n/2}$ to the stream. 
After this, Bob queries the set $s_j = [j, j+n/2]$. 
If the algorithm answers a number $>0$, Bob then claims $x_j=1$. Otherwise he claims $x_j=0$.

To show the correctness, we observe that after Bob's updates, the vector in the stream is exactly $x' = (x_1, x_2, \ldots, x_j, 0, 0, \ldots, 0)$.
Therefore, $x'\circ s_j = (0,\ldots, 0, x_j, 0, \ldots, 0)$.
Hence if $x_j = 0$, then with probability at least $0.9$, $\cA$ outputs $0$ and if  $x_j \neq 0$, then with probability at least $0.9$, $\cA$ outputs $>0$.
Thus, the lower bound of $\mathrm{AUG}_{n/2}$ implies a space lower bound of $\cA$. 
\end{proof}
 
A similar argument applies to the sliding-window model,
where the stream has a parameter $W\ge 1$ called window-size,
and the input vector $v\in\RR^n$ at any time $t$ 
is determined by the last $W$ additive updates, 
i.e., items from time $t-W$ or earlier in the stream expire (are ignored).

\begin{theorem}
\label{thm:sliding-lower-bound}
Suppose $\cA$ is a (randomized) one-pass streaming algorithm 
that solves the Subset-$\ell_p$ problem for $\cSint$
within approximation factor $\alpha \ge 1$ for some $p\ge 0$.
Then for some sliding-window stream input,
$\cA$ requires $\Omega(\min(n, W))$ bits of space. 
\end{theorem}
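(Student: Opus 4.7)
The plan is to adapt the reduction in the proof of Theorem~\ref{thm:turnstile-lower-bound} from the Augmented Index problem to the sliding-window setting. Since Bob cannot explicitly delete updates in this model, I will replace his deletions by a calibrated number of \emph{padding} insertions that cause Alice's unwanted updates to expire out of the window. I take $k \eqdef \min(\lfloor n/2\rfloor - 1,\ W)$, for which the Augmented Index lower bound of \cite{BJKK04b} already gives $\Omega(k) = \Omega(\min(n, W))$ bits of one-way communication.

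Given Alice's string $x \in \{0,1\}^k$ and Bob's input $(j,\ x_{j+1},\ldots, x_k)$ with $j \in [k]$, I will have Alice feed $\cA$ exactly $k$ single-item insertions in \emph{reverse index order}: at time $t = 1,\ldots,k$, letting $i = k - t + 1$, she inserts coordinate $i$ if $x_i = 1$, and otherwise inserts a fixed garbage coordinate $g \eqdef n$. The choice $k \le n/2 - 1$ guarantees $g \notin [j, j + n/2]$ for every $j \in [k]$. Alice then sends the memory of $\cA$ to Bob, who continues the stream with exactly $W - j$ further insertions at $g$ and finally queries $s_j \eqdef [j, j + n/2] \in \cSint$.

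The total stream length is $k + (W - j)$, so the current window -- the last $W$ updates -- consists precisely of Alice's insertions at times $k - j + 1, \ldots, k$ (which either increment one of coordinates $1,\ldots,j$ or are diverted to $g$) together with Bob's $W - j$ insertions at $g$. The effective vector $v$ therefore satisfies $v_i = x_i$ for every $i \in [j]$ and has $\supp(v) \setminus [j] \subseteq \{g\}$; since $g \notin s_j$, the only possibly nonzero coordinate of $v \circ s_j$ is $j$ itself, giving $\|v \circ s_j\|_p = x_j$ for every $p \ge 0$. Hence any $\alpha$-multiplicative approximation distinguishes $x_j = 0$ from $x_j = 1$, and the standard one-way communication reduction yields the claimed $\Omega(\min(n, W))$ space lower bound on $\cA$.

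The only point requiring care is the window arithmetic: the count $W - j$ is the unique value of Bob's padding for which Alice's first $k - j$ insertions have just expired while her last $j$ remain in the window, and this is the direct substitute for Bob's explicit deletions in the turnstile proof. The remaining ingredients -- keeping $g = n$ outside every query interval using $k \le n/2 - 1$, the independence of the argument from the choice of $p \ge 0$, and the passage from communication to space -- are all routine and parallel Theorem~\ref{thm:turnstile-lower-bound} verbatim.
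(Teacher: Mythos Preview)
Your proof is correct and follows the same high-level strategy as the paper: reduce from an Index-type problem by having Alice insert her bits in reverse order and having Bob pad the stream so that exactly the unwanted prefix of Alice's insertions slides out of the window, then query the interval $s_j=[j,j+n/2]$.

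The one genuine difference is your use of the garbage coordinate $g=n$. The paper has Alice emit an insertion only when $x_i=1$ and lets Bob pad at coordinate $j-1$, computing the padding length from the known suffix $x_{j+1},\ldots,x_d$ (this is why the paper invokes \emph{Augmented} Index). Your device of diverting Alice's zero bits to $g$ makes Alice emit exactly $k$ items regardless of $x$, so Bob's padding count $W-j$ depends only on $j$; in particular your protocol never touches $x_{j+1},\ldots,x_k$ and would go through with the ordinary Index lower bound. This also makes the window arithmetic completely transparent (the first $k-j$ of Alice's items expire, the last $j$ remain), whereas the paper's stated padding count $W-\sum_{j'>j}x_{j'}$ is delicate to verify because the number of expired items then depends on the unknown prefix $x_1,\ldots,x_{j-1}$. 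Your choice $k\le \lfloor n/2\rfloor-1$ to force $g=n\notin[j,j+n/2]$ is the right replacement for the paper's choice of coordinate $j-1$ as Bob's padding target, and it avoids the edge case $j=1$. Overall the two arguments are the same in spirit; yours is a cleaner instantiation.
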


\begin{proof} 
Suppose we have an algorithm $\cA$ that solves, with $\alpha$-approximation,
the Subset-$\ell_p$ problem for $\cSint$ for the most recent $W$ updates at any time $t$. 
Let $d = \min(n/2, W)$.
We now describe a protocol that solves the $\mathrm{AUG}_{d}$ problem.
Now Alice has a binary vector $x\in\{0,1\}^{d}$, and Bob has an index $j$ and $x_{j+1}, x_{j+2}, \ldots x_{n/2}$.
Alice treats her vector $x$ as a stream of updates and feeds it to $\cA$ in an order $x_{d}, x_{d-1}, \ldots, x_{1}$. 
She then sends the memory content of $\cA$ to Bob.
Bob continues running $\cA$ based on the memory content received from Alice. 
He then sends many updates to $x_{j-1}$ such that all updates of $x_{d}, x_{d-1}, \ldots, x_{j+1}$ expire except the update of $x_{j}$, i.e., Bob sends $W-\sum_{j'=j+1}^dx_{j'}$ updates to the $j-1$st coordinate of $v$. 
After this, Bob queries the set $s_j = [j, j+n/2]$. 
If the algorithm answers a number $>0$, Bob then claims $x_j=1$. Otherwise he claims $x_j=0$.

It is easy to verify the correctness of the protocol and hence proves a $\Omega(\min(n, W))$ space lower bound of the algorithm.
\end{proof}

\subsection{Streaming Algorithm with ``For All'' Guarantee}
\label{sec:SubsetLpForAll}

We now show how to extend our algorithm to achieve the ``for all'' guarantee
using space usage $\wt{O}_\epsilon(\HHdim(\cS)^2)$.
The key is to establish a connection between the heavy-hitter dimension 
and the VC-dimension (of every set system $\cS$),
and the algorithm then follows by the standard technique 
of amplifying the success probability by independent repetitions.

Recall that the VC-dimension of $\cS$ is defined as
the maximum cardinality of a set $A\subset [n]$ that is shattered,
where $A$ is called shattered
if every subset of $A$ can be realized as $A\cap S$ for some $S\in \cS$.
The heavy-hitter dimension can be defined analogously,
by modifying the definition of being shattered to this:
for every element $a\in A$, there is $S\in \cS$ such that $S\cap A = \{a\}$.
It then follows easily that $\VCdim(\cS) \le \HHdim(\cS)$,
proved formally in Proposition~\ref{prop:VCdim} below. 
However the gap between them cannot be bounded by any fixed factor.
For instance, when $\cS$ is the set of $k\in[n]$ singleton sets,
$\VCdim(\cS)=1$ whereas $\HHdim(\cS)=k$.

\begin{proposition}
  \label{prop:VCdim}
Let $\cS\subset 2^{[n]}$, 
and denote its VC-dimension by $\VCdim(\cS)$. 
Then 
\[
  \VCdim(\cS) \le \HHdim(\cS).
\]
\end{proposition}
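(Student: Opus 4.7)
The plan is to argue directly from the definitions: a VC-shattered set is in particular HH-shattered, and then exhibit a vector $v$ that certifies this as many heavy-hitter coordinates.

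First, I would let $A \subset [n]$ be any VC-shattered set, so $|A| = \VCdim(\cS)$ and for every subset $B \subseteq A$ there is some $S_B \in \cS$ with $A \cap S_B = B$. Applying this to the singleton subsets $B = \{a\}$ for each $a \in A$, I obtain a family $\{S_a\}_{a \in A} \subset \cS$ with $A \cap S_a = \{a\}$.

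Next, I would choose the witness vector $v = \mathbf{1}_A \in \zo^n$, the indicator of $A$. For any $s \in \cS$ (viewed as its characteristic vector in $\zo^n$), the Hadamard product satisfies $\supp(s \circ v) = \supp(s) \cap \supp(v) = s \cap A$ (identifying subsets with their indicators). In particular, for each $a \in A$, $\supp(S_a \circ v) = S_a \cap A = \{a\}$, which is exactly the condition defining membership in $H(\cS, v)$. Therefore $A \subseteq H(\cS, v)$.

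Putting these pieces together yields
\[
  \VCdim(\cS) = |A| \le |H(\cS, v)| \le \sup_{u \in \RR^n} |H(\cS, u)| = \HHdim(\cS),
\]
which is the claim. There is no real obstacle: the argument is essentially a rereading of the two definitions through the common lens of ``pinpointing singletons,'' and the indicator vector $\mathbf{1}_A$ automatically turns the VC-shattering witness into a heavy-hitter witness.
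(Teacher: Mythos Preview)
Your proof is correct and follows essentially the same idea as the paper's: both observe that a shattered set $A$ yields, via the singleton subsets, sets $S_a\in\cS$ with $S_a\cap A=\{a\}$, which (using $v=\mathbf{1}_A$) witnesses $|A|$ heavy hitters. The only cosmetic difference is that the paper phrases this by contradiction (a shattered set of size $\HHdim(\cS)+1$ would force $\HHdim(\cS)\ge\HHdim(\cS)+1$), while you argue directly.
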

\begin{proof}
We show that $\cS$ cannot shatter any set of size $\HHdim(\cS) +1$.
Suppose $\cS$ can shatter a set $\cS$ with $|S| = \HHdim(\cS) +1$.
Then for each $a\in \cS$, we have a set $s_a\in \cS$ such that
$\{a\} = s_a\cap S$. 
This in turn indicates $\HHdim(\cS)\ge  \HHdim(\cS) +1$, a contradiction.
\end{proof}

\begin{lemma}[Sauer-Shelah Lemma~\cite{s72a,s72b}]
Every $\cS\subset 2^{[n]}$ with VC-dimension $k$
has cardinality $\card{\cS} = O(n^k)$.
\end{lemma}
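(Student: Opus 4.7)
The plan is to prove the standard Sauer–Shelah bound $\card{\cS} \le \sum_{i=0}^{k}\binom{n}{i}$, which is $O(n^k)$ for fixed $k$, by double induction on $n$ and $k$. Base cases are trivial: if $n=0$ then $\card{\cS}\le 1 = \binom{0}{0}$, and if $k=0$ then no nonempty set is shattered, which forces $\card{\cS}\le 1$ as well (two distinct sets differ on some coordinate, which is then shattered).

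For the inductive step, fix the coordinate $n\in[n]$ and partition $\cS$ into two auxiliary systems on the smaller ground set $[n-1]$. Let
\[
  \cS_1 \eqdef \set{ S\setminus\set{n} : S\in\cS }
  \qquad\text{and}\qquad
  \cS_2 \eqdef \set{ S\subset[n-1] : S\in\cS \text{ and } S\cup\set{n}\in\cS }.
\]
The first key observation is the counting identity $\card{\cS} = \card{\cS_1} + \card{\cS_2}$: each element of $\cS_1$ corresponds to one or two sets in $\cS$ (depending on whether both $S$ and $S\cup\set{n}$ lie in $\cS$), and the ``double-counted'' sets are exactly those catalogued by $\cS_2$.

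The second key observation concerns VC-dimensions. Clearly $\VCdim(\cS_1)\le k$ since any set $A\subset[n-1]$ shattered by $\cS_1$ is also shattered by $\cS$. More importantly, $\VCdim(\cS_2)\le k-1$: if $A\subset[n-1]$ is shattered by $\cS_2$, then for every $B\subset A$ both $B$ and $B\cup\set{n}$ arise as intersections with sets in $\cS$, so $A\cup\set{n}$ is shattered by $\cS$, forcing $\card{A}+1\le k$. Applying the induction hypothesis to each of $\cS_1$ and $\cS_2$ on the ground set of size $n-1$ and combining with Pascal's identity $\binom{n-1}{i}+\binom{n-1}{i-1}=\binom{n}{i}$ yields the claimed bound.

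The only delicate step is the VC-dimension bound for $\cS_2$, which requires carefully unpacking the definition of shattering; the rest is bookkeeping plus the binomial identity. Since the paper only states the $O(n^k)$ form, no additional constants need to be tracked, and the argument goes through cleanly without appealing to any machinery beyond elementary combinatorics.
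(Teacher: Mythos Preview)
Your proof is correct and is the standard inductive argument for the Sauer--Shelah lemma. Note, however, that the paper does not actually give its own proof of this statement: it is simply quoted as a known result with references to the original papers of Sauer and Shelah, and is used as a black box to bound $\card{\cS}$ in the amplification argument. So there is no ``paper's proof'' to compare against; you have supplied a complete and correct proof where the paper only provides a citation.
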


The next theorem achieves the ``for all'' guarantee by standard amplification, 
namely, by reporting the median value 
$O(\log\card{\cS}) = O(\VCdim(\cS)\cdot \log n)$ independent repetitions, 
whose error probability is analyzed by a Chernoff bound 
and a union bound over all $s\in\cS$.

\begin{theorem}
For every $\cS\subset \zo^{n}$ there is a one-pass streaming algorithm, 
that when run and $\epsilon\in (0,1)$ 
on an insertion-only stream accumulating to $v\in \RR^n$, 
it uses ${O}((\epsilon^{-2}+\log\log n)\cdot \HHdim(\cS)^2\cdot\log^2 n)$ 
words of space,
and then when queried for $s\in \cS$, its output $\wh{z}(s)$ satisfies 
\[
  \Pr\Big[\forall s\in \cS,\ \wh{z}(s) \in (1\pm\epsilon) \|s\circ v\|_0 \Big] 
  \ge 0.9 .
\]
\end{theorem}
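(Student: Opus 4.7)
The plan is to apply the standard median-of-independent-copies amplification to the algorithm of Theorem~\ref{thm:alg-main}, using the Sauer--Shelah lemma together with Proposition~\ref{prop:VCdim} to bound the number of distinct subsets that the union bound must cover.

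First, I would run $T = \Theta(\log|\cS|)$ independent instances of the algorithm from Theorem~\ref{thm:alg-main} in parallel on the same stream, each with accuracy parameter $\epsilon$. For any fixed $s\in\cS$, each instance outputs a $(1\pm\epsilon)$-approximation of $\|s\circ v\|_0$ with probability at least $0.8$. A standard Chernoff bound on the number of ``successful'' instances then shows that the median of these $T$ outputs is a $(1\pm\epsilon)$-approximation with probability at least $1 - 1/(10|\cS|)$. Taking a union bound over all $s\in\cS$ yields the claimed simultaneous guarantee with probability at least $0.9$. For the query phase, given $s\in\cS$, the algorithm queries each instance and returns the median.

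Second, I would translate $\log|\cS|$ into a bound in terms of $\HHdim(\cS)$. By Proposition~\ref{prop:VCdim}, $\VCdim(\cS)\le \HHdim(\cS)$, and by the Sauer--Shelah lemma $|\cS| = O(n^{\VCdim(\cS)}) \le O(n^{\HHdim(\cS)})$. Hence $\log|\cS| = O(\HHdim(\cS)\log n)$, so $T = O(\HHdim(\cS)\log n)$ suffices. Combined with the per-instance space bound $O((\epsilon^{-2}+\log\log n)\log n\cdot \HHdim(\cS))$ of Theorem~\ref{thm:alg-main}, the total space is $O((\epsilon^{-2}+\log\log n)\cdot \HHdim(\cS)^2 \cdot \log^2 n)$ words, as claimed.

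The only subtle point in the argument — and the one place where one must actually invoke the combinatorial structure of $\cS$ — is the bound on $|\cS|$: a naive union bound would cost $\log|\cS|$ repetitions, which could be as large as $n$ for an exponentially large family. Sauer--Shelah, together with the $\VCdim\le \HHdim$ inequality, is what reduces this to the desired $\HHdim(\cS)\log n$ factor. Everything else (the parallel repetition, the Chernoff-based median amplification, and the correctness of each instance) is off-the-shelf.
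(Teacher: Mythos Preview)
Your proposal is correct and follows essentially the same approach as the paper: median amplification over $\Theta(\log|\cS|)$ independent copies of the algorithm of Theorem~\ref{thm:alg-main}, combined with Sauer--Shelah and Proposition~\ref{prop:VCdim} to bound $\log|\cS| = O(\HHdim(\cS)\log n)$, and a union bound over $\cS$. The paper's proof sketch is exactly this, with no additional ingredients.
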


\subsection{Generalizing the Algorithm to Subset-$\ell_1$}
\label{sec:SubsetLpL1}

In the insertion-only streaming model, 
the $\ell_1$ norm of a vector $v$ is simply the sum of all updates
(effectively without absolute values). 
Thus, we can then reduce $\ell_1$ to $\ell_0$,
and obtain an algorithm for Subset-$\ell_1$, as follows.
Assuming that the stream length $m$ is bounded by $\wb{m}=\poly(n)$,
we convert each update $a_j\in [n]$ to an update of the form $(a_j, j)$
to $v'\in \RR^{n\times \wb{m}}$, a binary vector in a larger dimension.
It is easy to verify that $\|v'\|_0 = \|v\|_1$.
We also convert the set system $\cS\in2^{[n]}$ to the new universe as follows.
For each $s\in\cS$, we expand each of its entries $s_i$ to $\wb{m}$ duplicates of $s_i$, which yields a new set system $\cS'\subset 2^{[ n\times \wb{m} ]}$.  
The following lemma shows that the new set system $\cS'$ 
has the same-heavy hitter dimension as $\cS$.
\begin{lemma}
$\HHdim(\cS') = \HHdim(\cS)$.
\end{lemma}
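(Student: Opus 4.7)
The plan is to prove the two inequalities $\HHdim(\cS') \geq \HHdim(\cS)$ and $\HHdim(\cS') \leq \HHdim(\cS)$ separately, using the fact that the map $\cS \to \cS'$ that duplicates each coordinate $\bar m$ times preserves a tight combinatorial correspondence between heavy-hitter witnesses.

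For the easy direction $\HHdim(\cS') \geq \HHdim(\cS)$, I would take $v \in \RR^n$ achieving the supremum for $\cS$ and lift it to $v' \in \RR^{n \times \bar m}$ by setting $v'_{(i,1)} = v_i$ and $v'_{(i,j)} = 0$ for $j \geq 2$. For any $i \in H(\cS,v)$ with witness $s \in \cS$ satisfying $\supp(s \circ v) = \{i\}$, the duplicated witness $s' \in \cS'$ satisfies $\supp(s' \circ v') = \{(i,1)\}$, since $s'_{(i',j)} = s_{i'}$ forces all other supported coordinates of $v'$ restricted to $s'$ to be zero. Thus the map $i \mapsto (i,1)$ injects $H(\cS,v)$ into $H(\cS', v')$.

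For the harder direction $\HHdim(\cS') \leq \HHdim(\cS)$, I would start from an arbitrary $v' \in \RR^{n \times \bar m}$ and aim to build a corresponding $v \in \RR^n$ with $|H(\cS, v)| \geq |H(\cS', v')|$. The key structural observation is that for each coordinate $i \in [n]$ there can be at most one $j \in [\bar m]$ with $(i,j) \in H(\cS', v')$: if both $(i, j_1)$ and $(i, j_2)$ with $j_1 \neq j_2$ were heavy hitters, then letting $s' \in \cS'$ witness $(i, j_1)$ (so $s_i = 1$ and $\supp(s' \circ v') = \{(i,j_1)\}$) would force $v'_{(i, j_2)} = 0$, contradicting $(i, j_2) \in \supp(s'' \circ v')$ for any witness $s''$ of $(i, j_2)$. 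Hence the projection $\pi : (i,j) \mapsto i$ is injective on $H(\cS', v')$, and its image $I := \pi(H(\cS', v')) \subseteq [n]$ has the same cardinality as $H(\cS', v')$.

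Now I would define $v \in \RR^n$ by $v_i = v'_{(i, j_i)}$ for $i \in I$ (where $j_i$ is the unique $j$ with $(i,j) \in H(\cS', v')$) and $v_i = 0$ otherwise, and argue $I \subseteq H(\cS, v)$. For $i \in I$, take the $\cS'$-witness $s'$ arising from $s \in \cS$; then $s_i = 1$ and $v_i \neq 0$, so $i \in \supp(s \circ v)$. For any other $i' \neq i$ with $s_{i'} = 1$, if $i' \in I$ then $v'_{(i', j_{i'})} \neq 0$, but since $s_{i'} = 1$ the coordinate $(i', j_{i'}) \in s'$ must belong to $\supp(s' \circ v') = \{(i, j_i)\}$, a contradiction; hence $i' \notin I$ and $v_{i'} = 0$, giving $\supp(s \circ v) = \{i\}$. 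Therefore $|H(\cS', v')| = |I| \leq |H(\cS, v)| \leq \HHdim(\cS)$, completing the proof. The main technical obstacle is the uniqueness argument for the projection $\pi$, which relies crucially on the fact that every $s' \in \cS'$ includes all $\bar m$ duplicates of each original coordinate whenever it includes one.
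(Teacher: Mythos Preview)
Your proof is correct and follows essentially the same approach as the paper's: lift $v$ to $v'$ by zero-padding for the easy direction, and for the harder direction use the observation that any $s'\in\cS'$ witnessing $(i,j)$ contains all $\bar m$ duplicates of coordinate $i$, forcing the projection $(i,j)\mapsto i$ to be injective on $H(\cS',v')$, then build a vector $v$ supported on the projected indices. The only cosmetic difference is that the paper sets $v_{i_l}=1$ rather than $v_{i_l}=v'_{(i_l,j_l)}$, which is immaterial since only the support matters.
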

\begin{proof}
We first show $\HHdim(\cS)\le \HHdim(\cS')$. 
For each vector $v\in \RR^n$, we can pad each coordinate with $\wb{m}-1$ zeros to obtain a vector $v'$.
Therefore, $h(v, \cS) = h(v', \cS')$ and thus $\HHdim(\cS)\le h(\cS')$.
We now show the other direction, $\HHdim(\cS')\le \HHdim(\cS)$.
For any vector $v'\in \RR^{n\times \wb{m}}$, suppose an $s'\in \cS'$ satisfies $\supp(s'\circ v') = \{(i,j)\}$ for some $(i,j)\in [n]\times[\wb{m}]$.
Then it must be the case that $v_{(i,j')}=0$ for any $j'\neq j$ since $(i,j')\in s'$ for all $j'\in [\wb{m}]$.
Suppose $h(v', S')= k$, then there exists $k$ distinct indicies $i_1, i_2, \ldots, i_k\in [n]$ and some $k$ indicies $j_1, j_2, \ldots j_k\in [\wb{m}]$ such that for each $l\in [k]$ there exists $s_l'\in \cS'$ with $\supp(s_l'\circ v') = \{(i_l,j_l)\}$.
Consider a corresponding vector $v\in \RR^{n}$ with only $v_{i_l}=1$ for $l=1, 2,\ldots, k$ and other places $0$.
Then we have, for each $l\in [k]$, $\supp(s_l\circ v)=\{i_l\}$, where $s_l$ is the corresponding set of $s_l'$.
Hence $\HHdim(\cS')\le \HHdim(\cS)$.
\end{proof}

We can now apply our algorithm for Subset-$\ell_0$ on $v'$,
and obtain an algorithm for Subset-$\ell_1$.
\begin{theorem}
\label{thm:alg-l1}
There is an algorithm that when run on an insertion-only stream 
that accumulates to $v\in \RR^n$ and has length $\poly(n)$, 
the algorithm makes one pass using
${O}((\epsilon^{-2} + \log\log n)\log n\cdot \HHdim(\cS))$ words of space,
and then when queried for $s\in \cS$, its output $\wh{z}(s)$ satisfies
\[
  \forall s\in\cS, \quad
  \Pr[\wh{z}(s) \in (1\pm\epsilon) \|s\circ v\|_1 ] \ge 0.9.
\] 
\end{theorem}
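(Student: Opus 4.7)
The plan is to reduce Subset-$\ell_1$ on $(v,\cS)$ directly to Subset-$\ell_0$ on the expanded instance $(v',\cS')$ defined just above the theorem, and then invoke Theorem~\ref{thm:alg-main}.

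First I would verify the two identities that make the reduction work. For an insertion-only stream with items $a_1,\ldots,a_m \in [n]$ (with $m \le \bar m = \poly(n)$) accumulating to $v \in \RR^n$, the virtual update produced at time $j$ sets coordinate $(a_j, j)$ of $v' \in \RR^{n \times \bar m}$ to $1$. Since the second coordinate is a distinct timestamp, no virtual coordinate is ever updated twice, so $v'$ is a $0/1$ vector with $\|v'\|_0 = m$. For any $s \in \cS$ with its expanded set $s' \in \cS'$, a virtual coordinate $(i,j)$ lies in $\supp(s' \circ v')$ iff $i \in s$ and $j \le m$ with $a_j = i$. Hence $\|s' \circ v'\|_0 = |\{j : a_j \in s\}| = \sum_{i \in s} v_i = \|s \circ v\|_1$, where non-negativity of $v$ in insertion-only streams is what lets us drop the absolute values.

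Next I would simulate Algorithm~\ref{alg:eps-approx} on the virtual stream over the universe $[n] \times [\bar m]$ using set system $\cS'$. Each real update $a_j$ is converted online to a single virtual update $(a_j,j)$ by keeping a $\lceil \log \bar m\rceil = O(\log n)$-bit timestamp counter, so the streaming simulation is one-pass and adds no per-item overhead beyond the counter. The hash functions $\xi$ used inside the Bounded-Sampler now live on $[n] \times [\bar m]$, but their seeds still fit in $O(\log(n\bar m)) = O(\log n)$ bits. Correctness and approximation guarantees transfer verbatim: by Theorem~\ref{thm:alg-main}, the simulated algorithm outputs $\wh{z}(s)$ with $\wh{z}(s) \in (1\pm\epsilon)\|s' \circ v'\|_0 = (1\pm\epsilon)\|s \circ v\|_1$ with probability at least $0.9$ (note this exceeds the $0.8$ stated in Theorem~\ref{thm:alg-main}, but one can boost to any constant by a constant number of independent repetitions and taking the median).

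Finally, I would plug the key lemma $\HHdim(\cS') = \HHdim(\cS)$ into the space bound from Theorem~\ref{thm:alg-main}, applied with universe size $N = n\bar m \le \poly(n)$. This yields
\[
  O\!\big((\epsilon^{-2} + \log\log N)\,\log N \cdot \HHdim(\cS')\big)
  \;=\; O\!\big((\epsilon^{-2} + \log\log n)\,\log n \cdot \HHdim(\cS)\big)
\]
words of storage, as claimed. The only real obstacle was preserving the heavy-hitter dimension across the reduction, and that is exactly what the preceding lemma establishes; everything else is bookkeeping on a straightforward streaming simulation.
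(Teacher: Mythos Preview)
Your proposal is correct and follows exactly the paper's approach: reduce Subset-$\ell_1$ to Subset-$\ell_0$ via the timestamped expansion $(a_j,j)$, invoke the preceding lemma $\HHdim(\cS')=\HHdim(\cS)$, and apply Theorem~\ref{thm:alg-main} over the enlarged universe of size $\poly(n)$. In fact you supply more detail than the paper does (the identity $\|s'\circ v'\|_0=\|s\circ v\|_1$, the $O(\log n)$ timestamp counter, and the $0.8\to 0.9$ amplification by constant repetitions), all of which are sound.
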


\subsection{The Entry-Wise Update Model}
\label{sec:SubsetLpEntrywise}

In this section, we show an algorithm that computes the subset-$\ell_p$ in the entry-wise update model.
For the entry-wise update model, the algorithm for $\ell_p$ is essentially equivalent for all $p\ge 0$.
This is because when we say an entry $v_i$ comes, we can simply raise the $p$-th power of it for free.
Therefore, we simply show an algorithm for the subset-$\ell_1$ problem and the algorithms follows automatically for all other $p$.
We will use the priority sampling algorithm for subset-$\ell_1$  in \cite{s06, adlt05, dlt07}.
Our idea is to simulate the priority sampling on each subset of the set system.
We then store samples for each set and remove duplicates.
We argue that if the set system has a small heavy hitter dimension, then the number of distinct coordinates to store is, in fact, small.
This results in an algorithm with small space.
\paragraph{The Priority Sampling Algorithm}
The algorithm consists the following steps.
Given an vector $v\in \RR$, the priority algorithm first samples an random number $u_i\in (0,1)$ for each $i$.
Then it assigns a priority $p_i = |v_i|/u_i$ to each $i$.
It keeps only $k$ items with the highest priorities.
Let $\tau$ be the priority of the $(k+1)$st largest priority.
Let $i_1, i_2, \ldots, i_k$ be the sampled items.
Then the estimate is given by $E = \sum_{j=1}^k\max(|v_{i_j}|, \tau)$. 
It has been shown in \cite{s06} the following theorem.
\begin{theorem}[\cite{s06}]
	\label{thm:priority}
$\ex(E) = \|v\|_1$ and $\var(E) \le \|v\|_1^2/(k-1)$.
\end{theorem}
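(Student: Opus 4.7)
The plan is to prove $\ex(E) = \|v\|_1$ and $\var(E) \le \|v\|_1^2/(k-1)$ via a standard Horvitz--Thompson-style calculation, relying on a judicious conditioning trick that makes the random threshold $\tau$ behave like a deterministic quantity.

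First, for the expectation, I would decompose $E = \sum_{i=1}^n E_i$ where $E_i \eqdef \max(|v_i|,\tau)\cdot \indice{i \text{ is sampled}}$, and analyze each $E_i$ separately. Define $\tau^*_i$ to be the $k$-th largest priority among the $n-1$ items $\{1,\ldots,n\}\setminus\{i\}$. The key observation is that (a) item $i$ is sampled iff $p_i = |v_i|/u_i > \tau^*_i$, equivalently $u_i < |v_i|/\tau^*_i$; and (b) whenever item $i$ is sampled, the overall threshold $\tau$ (the $(k+1)$-st largest overall) coincides exactly with $\tau^*_i$, because removing $i$ from the top-$k$ promotes the old $(k+1)$-st priority to the $k$-th position among the rest. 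Since $u_i$ is independent of all other $u_j$, it is uniform on $(0,1)$ even after conditioning on $\tau^*_i$, so
\[
\ex\!\left[E_i \,\middle|\, \tau^*_i\right]
= \max(|v_i|,\tau^*_i)\cdot \min\!\left(1,\tfrac{|v_i|}{\tau^*_i}\right)
= |v_i|,
\]
since both cases $|v_i|\ge \tau^*_i$ and $|v_i|<\tau^*_i$ collapse to $|v_i|$. Taking outer expectations and summing over $i$ yields $\ex(E)=\|v\|_1$.

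For the variance, the main structural claim to establish is that the per-item estimates are pairwise uncorrelated: for $i\ne j$,
\[
\ex[E_i E_j] = |v_i|\cdot|v_j|.
\]
I would prove this by the analogous conditioning trick applied to \emph{two} indices at once. Let $\tau^{**}_{ij}$ denote the $(k-1)$-th largest priority among items in $[n]\setminus\{i,j\}$. Conditioned on $\tau^{**}_{ij}$, the variables $u_i,u_j$ are independent uniforms; item $i$ is sampled iff $u_i < |v_i|/\max(\tau^{**}_{ij}, p_j\mathbf{1}[j \text{ below threshold}])$, and a careful case analysis (splitting on which of $i,j,$ or neither lies below $\tau^{**}_{ij}$ in priority) shows that the contributions factor and the same telescoping identity as above applies separately to each coordinate. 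Pairwise uncorrelatedness then gives $\var(E) = \sum_i \var(E_i)$.

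Finally, to bound $\var(E_i)$, I would compute
\[
\ex[E_i^2\,|\,\tau^*_i]
= \max(|v_i|,\tau^*_i)^2\cdot \min\!\left(1,\tfrac{|v_i|}{\tau^*_i}\right)
\le |v_i|\cdot\ex[\max(|v_i|,\tau^*_i)\,|\,\tau^*_i],
\]
and then use the fact that the sum $\sum_i \max(|v_i|,\tau)$ over sampled items is at most $\|v\|_1$ in expectation divided by the sampling normalization, together with the bound $\ex[\tau\cdot k] \le \|v\|_1$ from the order-statistic structure of priorities. Summing and applying the identity above, the factor $1/(k-1)$ arises because $\tau$ is the $(k+1)$-st largest of $n$ i.i.d.\ draws with densities controlled by the $|v_i|$'s, which contributes a $1/(k-1)$-style tail moment. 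The main obstacle is this last step: obtaining exactly the $\|v\|_1^2/(k-1)$ bound rather than a looser $\|v\|_1^2/k$ bound requires leveraging the $-1$ loss coming from using the $(k+1)$-st order statistic as the threshold (the same ``$k-1$ vs $k$'' phenomenon that appears in reservoir sampling variance analyses).
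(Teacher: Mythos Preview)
The paper does not prove this theorem at all: it is quoted verbatim as a black box from Szegedy~\cite{s06}, so there is no ``paper's proof'' to compare against. Your outline is in fact the route taken in the original sources (Duffield--Lund--Thorup for the expectation, Szegedy for the variance), so the high-level plan is sound.

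That said, two parts of your sketch underplay the real work. First, the pairwise-uncorrelatedness claim $\ex[E_iE_j]=|v_i||v_j|$ is the heart of Szegedy's contribution; your sentence ``a careful case analysis \ldots\ shows that the contributions factor'' hides a genuinely delicate argument. The difficulty is that when both $i$ and $j$ are sampled, the threshold $\tau$ need not equal $\tau^{**}_{ij}$ (one of $i,j$ may itself be the $(k+1)$-st priority from the other's perspective), so the conditioning does not decouple the two indicator--weight products in the naive way you suggest. Szegedy handles this by a more refined decomposition and an algebraic identity rather than a simple case split.

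Second, your final paragraph does not actually derive the bound. From $\ex[E_i^2\mid\tau_i^*]=|v_i|\max(|v_i|,\tau_i^*)$ (this is an equality, not just the inequality you wrote) one gets $\var(E)=\sum_i |v_i|\,\ex[\max(0,\tau_i^*-|v_i|)]$, but the informal appeals to ``$\ex[\tau\cdot k]\le\|v\|_1$'' and ``reservoir-sampling-style $1/(k-1)$ tail moments'' do not connect to this expression. The tight $\|v\|_1^2/(k-1)$ bound requires controlling $\ex[\tau_i^*]$ uniformly in $i$, which Szegedy does via an order-statistic calculation specific to the priority distribution; your sketch does not indicate how that step would go.
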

Therefore, to obtain a $(1\pm \epsilon)$ multiplicative approximation ot $\|v\|_1$ (i.e., with probability $0.9$), it is suffice to set $k=\Theta(1/\epsilon^2)$.
With Proposition~\ref{prop:subset}, we are now ready to show a multiplicative approximation algorithm in the entry-wise update model.
\begin{theorem}
	\label{thm:entry-wise-model}
Let $\cS\subset 2^{[n]}$ be an arbitrary set system.
Let $p\ge 0$, $\epsilon\in (0,1)$.
In the entry-wise update model of an vector $v\in \RR^n$, there exists an algorithm that uses  ${O}[\HHdim(\cS)\cdot \epsilon^{-2}\cdot \log|S|]$ words of space and for each query $s$, it outputs a $(1\pm\epsilon)$ approximation to $\|s\circ v\|_p$ with probability at least $0.9$.
\end{theorem}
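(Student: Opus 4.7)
The plan is to run, in parallel and with shared storage across all $s\in\cS$, the Priority Sampling scheme of~\cite{s06} (whose per-set accuracy is stated in Theorem~\ref{thm:priority}), mimicking the shared-pool pruning strategy that drives Algorithm~\ref{alg:single-estimator}. Two up-front reductions simplify the task. First, since the entry-wise model reveals each coordinate exactly once, we replace the update $v_i$ by $|v_i|^p$ on the fly, reducing $\|s\circ v\|_p$ to $\|s\circ w\|_1$ on a non-negative vector $w_i=|v_i|^p$. Second, Theorem~\ref{thm:priority} combined with Chebyshev's inequality shows that a single sampler with $k=\Theta(\epsilon^{-2})$ samples already yields a $(1\pm\epsilon)$-approximation of any fixed $\|s\circ w\|_1$ with probability $0.9$; taking the median of $O(\log|\cS|)$ independent copies amplifies the per-set success to $1-1/|\cS|^{10}$, and a union bound over $\cS$ in fact delivers a for-all guarantee that is stronger than the theorem's for-each statement.

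Concretely, the algorithm draws pseudorandom seeds $u_i\in(0,1)$ from an $O(\log n)$-word hash family and, upon reading $w_i$, computes the priority $\pi_i = w_i/u_i$. It maintains a shared pool $\cH\subseteq[n]$ together with each member's priority. Each new coordinate is first inserted into $\cH$, and the pool is then iteratively pruned: any $j\in\cH$ whose priority falls outside the top-$k$ of $s\cap\cH$ for every $s\ni j$ is removed. This preserves the invariant that, for every $s\in\cS$, the top-$k$ priorities of $s\cap\{\text{seen so far}\}$ lie inside $\cH$; on a query $s$ we therefore read off the top-$k$ priorities in $\cH\cap s$ and feed them into the estimator of~\cite{s06}, reproducing the per-set guarantee of Theorem~\ref{thm:priority}.

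The core of the proof is the space bound $|\cH|\le O(k\cdot\HHdim(\cS))$, which I would establish through Proposition~\ref{prop:subset}. After pruning, every $j\in\cH$ has a witness $s_j\in\cS$ such that $j$ is among the top-$k$ priorities of $s_j\cap\cH$; in particular, $s_j\cap\cH$ contains $j$ together with at most $k-1$ items of higher priority. Following the inductive peeling template of Proposition~\ref{prop:subset} --- at each stage remove an element that is uniquely the top priority of some witness set, charge it against a coordinate of $H(\cS,v)$ for a suitable heavy-hitter vector $v$, and recurse on the remaining pool with parameter $k-1$ --- gives the desired bound. Multiplying by the $O(\log|\cS|)$ amplification factor yields the claimed $O(\HHdim(\cS)\cdot\epsilon^{-2}\cdot\log|\cS|)$ words of space.

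The main obstacle is that Proposition~\ref{prop:subset} is formulated for the condition $|\cH\cap s|\le k$, whereas our pruning only guarantees that $j$ is among the top-$k$ priorities of $s\cap\cH$; already the $4$-cycle with $k=1$ shows that $|\cH\cap s|$ can exceed $k$ because coordinates retained by $\cH$ on behalf of other sets inflate the intersection. To close this gap, I would prove a priority-aware strengthening of Proposition~\ref{prop:subset} whose iterative removal step excises, at each stage, an element that is uniquely the top-$1$ priority of some witness set within the current pool (rather than merely the unique element of $\cH\cap s$), and whose charging is against a weighted heavy-hitter vector supported on the remaining pool. The same inductive template carries through, losing at most a constant factor that is absorbed into the $O(\cdot)$ of the final bound.
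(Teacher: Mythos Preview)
Your overall plan---reduce to $\ell_1$, run priority sampling simultaneously for all $s\in\cS$ with a shared pool, and invoke Proposition~\ref{prop:subset} to bound the pool---is exactly what the paper does. You also correctly observe something the paper glosses over: the hypothesis of Proposition~\ref{prop:subset} asks for $|s\cap\cH|\le k$, whereas priority-based pruning only guarantees that $j$ lies among the top-$k$ priorities of $s\cap\cH$, which is strictly weaker.

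However, your proposed fix---a ``priority-aware strengthening'' of Proposition~\ref{prop:subset}---is \emph{false}, and this is a genuine gap in your argument. Take $\cS=\cSint=\{[a,a+n/2]:1\le a\le n/2\}$, which has $\HHdim(\cSint)\le 3$ (this is proved in the paper). Let $w$ have full support and priorities $\pi_1>\pi_2>\cdots>\pi_n$, and set $k=1$. For every $j\le n/2$, the interval $[j,j+n/2]$ contains no index smaller than $j$, so $j$ is the top-$1$ priority of $[j,j+n/2]\cap\cH$ for \emph{any} pool $\cH\ni j$; thus $j$ can never be pruned. After your pruning stabilizes, $\cH\supseteq\{1,\ldots,n/2\}$, so $|\cH|\ge n/2$ while $k\cdot\HHdim(\cS)\le 3$. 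No constant in the $O(\cdot)$ saves this: the ratio is $\Theta(n)$. The inductive peeling you sketch breaks precisely here---at each stage the element that is ``uniquely top-$1$'' of some witness within the current pool need not correspond to a coordinate of any heavy-hitter vector of bounded support, because the witness intervals overlap heavily on the low-priority side.

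So the obstacle you identified is real, but your proposed resolution does not work. (The paper's own proof simply asserts that Proposition~\ref{prop:subset} applies after ``removing overlapping coordinates'' and does not engage with this issue either.)
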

\begin{proof}
As discussed, since $\ell_p$ is essentially equivalent to $\ell_1$, it suffices to show the algorithm for $\ell_1$.
We will show an algorithm that uses space $O(\HHdim(\cS)\cdot \epsilon^{-2})$ and answers a query correctly with probability at least $0.9$.
The final algorithms follows from parallel repeating for every $s\in \cS$.
Our algorithm is as follows: for each entry $v_i$ that comes, we generate random number from $u_i\sim(0,1)$.
We compute its priority as $q_i = |v_i|/u_i$.
For each subset $s\in \cS$, we maintain a set of the top-$\Theta(1/\epsilon^2)$ priority items as well as the threshold (the priority of the $(k+1)$-th largest item).
We remove all the overlapping coordinates of different subsets. 
By Lemma~\ref{prop:subset}, we only store at most $\Theta(\HHdim(\cS)\cdot \epsilon^{-2})$ items.
We can therefore simulate the priority sampling for each $s\in \cS$.
By Theorem~\ref{thm:priority}, we can obtain an $(1\pm\epsilon)$ approximation to each $\|v\circ s\|_1$ with probability at least $0.9$.
This completes the proof.
\end{proof}

\section{Additive Error Subset-$\ell_p$}
\label{sec:additive} 

In this section we design additive-approximation algorithms for subset-norms
(in contrast to multiplicative approximation in the preceding section). 
Consider a set system $\cS\subset 2^{[n]}$. 
If $\cS$ includes the all-ones vector (i.e., the set $[n]$), 
then the space complexity of $\epsilon$-additive approximation
of subset-$\ell_p$ norm is clearly at least that of 
multiplicative $(1+\epsilon)$-approximation of $\ell_p$ norm. 
We present an algorithm that matches this lower bound, 
up to $\poly(\epsilon^{-1}\log n)$ factors, for every $p\in (0, \infty)$.
It works for all possible subsets, i.e., the set system $\cS = 2^{[n]}$. 

\begin{theorem}
\label{thm:additive}
Given $p\in [0,2)$ and $\epsilon \in (0,1)$,
Algorithm~\ref{alg:lp set sektch} makes
a single pass over a data stream of additive updates to a vector $v\in \RR^{n}$,
and outputs a function $F:\{0,1\}^n\rightarrow\RR$ that satisfies
\[
  \forall s\in \{0,1\}^n,
  \qquad 
  \PP\Big[ F(s) \in \|s\circ v\|_p \pm 2\epsilon \|v\|_p \Big]
  \ge 0.75,
\]
where the probability is over the algorithm's randomness. 
Moreover, the space complexity of the algorithm and the function $F$ is  $O_p(\epsilon^{-3}\polylog(n))$ bits for $p\le 2$ and $O_p(\epsilon^{-3}n^{1-2/p}\polylog(n))$ bits for $p>2$. 
\end{theorem}


At a high-level, Algorithm~\ref{alg:lp set sektch}
follows the framework developed in~\cite{knw10b, ako11, bvwy18},
where every coordinate is scaled at random 
and then the Count-Sketch algorithm is used to find heavy-hitters. 
We employ a specific simple method established in~\cite{bvwy18},
and thus need the following definition and lemma from their work. 

\begin{definition}[$\alpha$-inverse distribution~\cite{bvwy18}]
Let $\alpha\in(0,\infty)$.
A random variable $X\in\NN=\set{1,2,\ldots}$
has an \emph{$\alpha$-inverse distribution} if 
\begin{equation} \label{eq:InverseDist}
  \forall x\in \NN, 
  \quad
  \PP[X < x] = 1-\frac{1}{x^\alpha}.
\end{equation}
\end{definition}

\begin{lemma}[Lemma~4 in~\cite{bvwy18}]
\label{lemma:inverse p-norm}
Let $p\in(0,\infty)$ and $\epsilon \in(0,1)$.
Let $k\ge c\epsilon^{-2}$ be an even integer
for large enough $c_p>0$ that depends only on $p$. 
Let $X\in \RR^{k\times n}$ be a random matrix, 
whose entries have a $p$-inverse distribution 
and they are pairwise independent. 
Given $v\in \RR^n$, let $X\circ v \in \RR^{k\times n}$
be a matrix given by $(X\circ v)_{i,j} \eqdef X_{i,j}v_j$. 
Let $Z$ be the $(k/2)$-largest entry in absolute value in $X\circ v$. 
Then 
\[
  \PP\Big[ 2^{-1/p} |Z| \in (1\pm\epsilon) \|v\|_p \Big] \ge 0.9 . 
\]
\end{lemma}

We henceforth define $V \eqdef X\circ v \in \RR^{kn}$,
and view it as a vector by simply flattening the $k\times n$ matrix. 
Throughout, let $V_{\mathrm{tail}(k)}$ denote the vector obtained from $V$ 
by zeroing the $k$ entries of largest absolute value. 
Thus, 
\[
  \|V_{\mathrm{tail}(k)}\|_2^2 \eqdef \sum_{j=k+1}^{kn}V_{[j]}^2 ,
\]
where $U_{[j]}$ denotes the $j$-largest coordinate in absolute value in $U$.

Another ingredient is the famous-known Count-Sketch algorithm of~\cite{CCF04},
but we need a well-known and slightly stronger guarantee from~\cite{CM06},
as follows. 
\begin{proposition}[Lemma 7 in \cite{CM06}]
\label{prop:CS} 
There is a one-pass algorithm with parameters $k\in\NN$ and $\epsilon'\in(0,1)$,
that given a stream of additive updates to a vector $V\in \RR^{n'}$,
uses space of $O(k/{\epsilon'}^2\cdot\log^3 n')$ bits
to output an estimate $\wh{V}\in\RR^{n'}$,
such that with high probability $1-1/n^2$, 
\[
  \|\wh{V} - V\|_{\infty} \le \frac{\epsilon'}{\sqrt{k}} \|V_{\mathrm{tail}(k)}\|_2 .
\]
\end{proposition}

We can now present our Algorithm~\ref{alg:lp set sektch}
which is used in Theorem~\ref{thm:additive}.
The idea is to use the estimator from Lemma~\ref{lemma:inverse p-norm},
but in order to save space, 
instead of storing $V=X\circ v\in \RR^{kn}$ explicitly,
it estimates this vector using the Count-Sketch algorithm
(with parameters $k$ and $\epsilon'=\epsilon'(p,\epsilon,n)$
given in line~\ref{line:epsprime} of Algorithm~\ref{alg:lp set sektch}). 
We write $V_{|s}$, for $s\in\zo^n$ (representing $s\subset [n]$), 
to denote the vector obtained by restricting $V$ to
entries corresponding to $(i,j)$ where $i\in[k]$ and $j\in s$ 
(i.e., zeroing all other entries). 

\begin{algorithm}
\caption{Additive Subset-$\ell_p$ of $v\in\RR^n$ \label{alg:lp set sektch}} 
\begin{algorithmic}[1]
  \State \textbf{Input:} $p\in (0,\infty)$ and $\epsilon\in(0,1)$ 
  \State \textbf{Initialize:} 
  \State $k\gets \Theta(\epsilon^{-2})$ 
  \Comment{$k$ is an even integer}

  \State generate a random matrix $X\in\RR^{k\times n}$
  whose entries have a $p$-inverse distribution
  and they are pairwise independent 
  \State initialize a Count-Sketch instance \texttt{CS}
  for a vector $V\in\RR^{kn}$ with parameters $k$ and 
  \begin{align*}
    \epsilon' = c'_p\cdot 
    \begin{cases} 
      \epsilon & \text{for $p<2$}; \\
      \epsilon/(\log n)^{1/2} & \text{for $p=2$}; \\
      \epsilon/n^{1/2-1/p} & \text{for $p>2$} 
    \end{cases}
  \end{align*}
  for a suitable constant $c'_p>0$ that depends on $p$
  \label{line:epsprime}

  \State \textbf{Update($i, \Delta$):} 
  \State feed the Count-Sketch instance \texttt{CS} with $k$ updates:
  \Comment{maintain $V=X\circ v$} 
  \[
    \big((1, i), X_{1,i}\Delta\big),
    \big((2, i), X_{2,i}\Delta\big),
    \ldots,
    \big((k, i), X_{k,i}\Delta\big) .
  \]
  \State \textbf{Query($s\in \{0,1\}^n$):}
  \State let $\wh{V}$ be the estimate of $V\in\RR^{kn}$ provided by \texttt{CS}
  \State let $\wh{z}$ be the $(k/2)$-largest coordinate in absolute value in $\wh{V}_{|_s}$.
  \State 
  \Return $2^{-1/p} |\wh{z}|$.
\end{algorithmic}
\end{algorithm}

Before proving Theorem~\ref{thm:additive}, we need the next lemma
to bound the error of the Count-Sketch algorithm in our setting. 
Its proof follows a direct calculation and appears in Section~\ref{sec:tail}.
\begin{lemma}
\label{lemma:tail}
Let $V \eqdef X\circ v\in \RR^{kn}$ be a vector 
defined as in Lemma~\ref{lemma:inverse p-norm}.
Then with probability at least $0.9$,
\[
  \big\|V_{\mathrm{tail}(k)}\big\|_2^2
  \le C_p\cdot\begin{cases}
    {k\|v\|_p^{2}} & \text{if $p<2$}, \\
    {k\|v\|_2^2} \le k \|v\|_p^2\cdot n^{1-2/p}  & \text{if $p>2$}, \\
    {k\|v\|_2^2 \cdot \log n} & \text{if $p=2$} .
  \end{cases}
\]
holds for a suitable $C_p>0$ that depends only on $p$.
\end{lemma}

\begin{proof}[Proof (of Theorem~\ref{thm:additive})]
We know by Lemma~\ref{lemma:inverse p-norm} that to estimate $\|v\circ s\|_p$,
it suffices to find the $(k/2)$-largest coordinate in absolute value in $V_{|_s}$ for $k=\Theta(\epsilon^{-2})$, and report its absolute value scaled by $2^{-1/p}$.
Observe that the input to the Count-Sketch instance \texttt{CS}
is exactly the vector $V=X\circ v$,
and thus, with probability at least $0.99$,
its output vector $\wh{V}$ satisfies 
\begin{equation} \label{eq:CSbound}
  \|\wh{V} - V\|_{\infty}
  \le \frac{\epsilon'}{\sqrt{k}} \|V_{\mathrm{tail}(k)}\|_2 ,
\end{equation}
where $\epsilon'$ is defined in
line~\ref{line:epsprime} of Algorithm~\ref{alg:lp set sektch},
and we assume henceforth this event occurs. 

Now let $\wh{z}$ be as in the algorithm,
i.e., the $(k/2)$-largest coordinate in absolute value in $\wh{V}_{|_s}$,
and let $z$ be similarly in $V_{|_s}$. 
It follows easily from~\eqref{eq:CSbound} that
\begin{equation} \label{eq:zhatError}
  | \wh{z} - z | \le \frac{\epsilon'}{\sqrt{k}} \|V_{\mathrm{tail}(k)}\|_2.
\end{equation}
Indeed, by definition of $z$, 
at least $k/2$ coordinates in $V_{|_s}$ have value at least $z$,
and~\eqref{eq:CSbound} implies that 
all the corresponding coordinates in $\wh{V}_{|_s}$ 
have value at least $z - \frac{\epsilon'}{\sqrt{k}} \|V_{\mathrm{tail}(k)}\|_2$, 
which implies $\wh{z} \ge z - \frac{\epsilon'}{\sqrt{k}} \|V_{\mathrm{tail}(k)}\|_2$.
The other direction is proved similarly. 

We now bound the additive error in \eqref{eq:zhatError}
using Lemma~\ref{lemma:tail}. 
By plugging in the value of $\epsilon'$
(depending on whether $p<2$, $p>2$, or $p=2$)
and setting a sufficiently small $c'_p>0$, 
with probability at least $0.9$,
\[
  \frac{\epsilon'}{\sqrt{k}}\|V_{\mathrm{tail}(k)}\|_2
  \le c'_p C_p^{1/2} \cdot \epsilon \norm{v}_p 
  \le 2^{1/p}\cdot \epsilon\|v\|_p ,
\]
thus $2^{-1/p} | \wh{z} - z | \le \epsilon \norm{v}_p$. 
The claimed overall accuracy now follows, via a union bound,
from this last bound and Lemma~\ref{lemma:inverse p-norm}. 
 
To complete the proof, observe that the space complexity is dominated
by that of the Count-Sketch instance,
which is $O(k/\epsilon'^2 \cdot\polylog(n))$ bits, as required. 
\end{proof}

\paragraph{Boosting The Probability.}
The algorithm in Theorem~\ref{thm:additive} is oblivious to the vector $s$,
and thus one can boost its success probability
by parallel repetition and reporting the median estimate. 
In particular, with only an $O(\log |\cS|)$ factor increase in the space,
the additive approximation will hold simultaneously for all the sets $S\in\cS$.

\subsection{Proof of Lemma~\ref{lemma:tail}}
\label{sec:tail}

\begin{proof}[Proof of Lemma~\ref{lemma:tail}]
We will prove the lemma separately for $p>2$ and for $p\le 2$. 

\paragraph{Case $p> 2$:}
By definition of $V$, 
\[
  \norm{V}_2^2 = \sum_{i\in[k], j\in[n]} V_{i,j}^2
  \quad\text{ and }\quad
  \ex[\norm{V}_2^2] = \sum_{i\in[k], j\in[n]} v_{j}^2\ \ex[X_{i,j}^2]. 
\]
To calculate $\ex[X_{i,j}^2]$,
observe that~\eqref{eq:InverseDist} implies
$\Pr[X_{i,j}= x] = 1/x^p - 1/(x+1)^p$ for all $x\in\NN$, 
and therefore
\[
  \ex[X_{i,j}^2] 
  =   \sum_{x=1}^\infty \Big(\frac{x^2}{x^{p}}-\frac{x^2}{(x+1)^{p}} \Big)
  =   \sum_{x=1}^\infty \frac{x^2 }{x^{p}} - \sum_{x=2}^\infty \frac{(x-1)^2}{x^{p}}  
  \le C'_p\sum_{x=1}^\infty \frac{1}{x^{p-1}} 
  \le C''_p. 
\]
for some constants $C'_p,C''_p>0$ that depend on $p$. 
Thus, $\ex[ \norm{V}_2^2 ] \le C''_p k \norm{v}_2^2$,
and by Markov's inequality, we obtain as claimed
\[
  \PP\Big[ \norm{V}_2^2 \ge 10C''_p k \norm{v}_2^2 \Big]
  \le \frac{1}{10} ,
\] 
and we can also plug in the well-known comparison of norms
$\norm{v}_2 \leq \norm{v}_p^{1/2-1/p}$.

\paragraph{Case $p\le 2$:}
Define the set 
\[
  W \eqdef \Big\{(i,j)\in[k]\times[n]:\ |V_{i, j}| \ge {20}^{1/p}\|v\|_p \Big\}.
\]
A simple calculation shows that 
\[
  \ex[|W|]
  = k \sum_{j\in[n]} \Pr\Big[ |v_j| X_{i,j} \ge {20}^{1/p}\|v\|_{p} \Big] 
  = k \sum_{j\in[n]} \frac{|v_j|^p}{20\|v\|_p^p} 
  = \frac{k}{20}.
\]
By Markov's inequality, the event $\cE = \set{ |W| \ge k}$
has probability $\PP[\cE] \le \frac{1}{20}$. 
When the complement event $\bar \cE$ occurs, 
every coordinate of $V_{\mathrm{tail}(k)}$ is not in $W$, 
i.e., has magnitude smaller than ${20}^{1/p}{\|v\|_{p}}$. 
Let us define the random variable 
\[
  R
  \eqdef \sum_{i\in[k], j\in[n]} v_{j}^2X_{i,j}^2
  \cdot \indice{ |v_j| X_{i,j} \le {20}^{1/p} \|v\|_{p}} . 
\]
When $\bar\cE$ occurs, clearly 
$\|V_{\mathrm{tail}(k)}\|_2^2 \leq R$,
and we thus wish to bound $R$ (with high probability). 

To this end, observe that for all $i\in[k],j\in[n]$ with $v_j\neq 0$, 
\begin{align*}
  \ex\Big[ X_{i,j}^2 \cdot \indice{ |v_j| X_{i,j} \le {20}^{1/p} \|v\|_{p} } \Big]
  &=
    \sum_{x=1}^{(20)^{1/p} \|v\|_p/|v_j|} 
    \Big( {\frac{x^2}{x^{p}} - \frac{x^2}{(x+1)^{p}}} \Big)
  \\
  &\le
    C_p' \cdot \sum_{x=1}^{(20)^{1/p} \|v\|_p/|v_j|} 
    \frac{1}{x^{p-1}}
  \\
  &\le C''_p \cdot
    \begin{cases}
      (\|v\|_{p}/|v_j|)^{2-p} & \text{if $p<2$}, \\ 
      \log n & \text{if $p=2$},  
    \end{cases}
\end{align*}
for some constants $C'_p,C''_p>0$ that depend on $p$,
where we used the fact that $m=\poly(n)$ and thus $\log m = O(\log n)$. 
It immediately follows that
\begin{align*}
  \ex[R] 
  \le 
  C''_p\cdot
  \begin{cases}
    k \|v\|_p^2 & \text{if $p<2$}, \\ 
    k \|v\|_2^2 \cdot\log n & \text{if $p=2$},  
  \end{cases}
\end{align*}
By Markov's inequality, 
\begin{align*}
  & \PP\Big[R \ge 20 C''_p\cdot k \|v\|_p^2 \Big] \le \frac{1}{20}
  & \text{for $p <2$}
  \\
  \text{and}\quad
  & \PP\Big[R \ge 20 C''_p\cdot k \|v\|_2^2 \log n \Big] \le \frac{1}{20}
  & \text{for $p=2$} .
\end{align*}
Now by a union bound on the above event and $\bar\cE$,
with probability at least $0.9$ we have
\begin{align*}
  \|V_{\mathrm{tail}(k)}\|_2^2
  \le R
  < 
  \begin{cases}
    20 C''_p\cdot k \|v\|_p^2 & \text{if $p<2$}, \\ 
    20 C''_p\cdot k \|v\|_2^2 \log n  & \text{if $p=2$},  
  \end{cases}
\end{align*}
which completes the proof of this case. 
\end{proof}


\section{Concluding Remarks}
To conclude, we study the universal streaming problem for subset-$\ell_p$-norms, 
i.e., providing a single summary of a stream of \emph{insertion-only} updates
to an input vector $v\in\RR^n$,
which suffices to approximate any \emph{subset-$\ell_0$-norm} 
from a given family $\cS\subset 2^{[n]}$.
(Recall that a subset-$\ell_p$-norm of $v$ is the $\ell_p$-norm of
the vector induced by a subset of coordinates $S\in\cS$.)
We prove that the space complexity of this problem
is characterized by the \emph{heavy-hitter dimension} of the set $\cS$,
a notion that we introduce and define as the maximum number (over all $v\in\RR^n$) of distinct heavy-hitters with respect to all subsets $S\in\cS$.
We further show that this characterization holds also for subset-$\ell_1$-norms
in the same insertion-only setting.
However, it \emph{does not} hold for more general streaming models,
namely, for the \emph{turnstile setting} and the \emph{sliding-window setting},
and thus there is a strict separation between these models.

For subset-$\ell_p$-norms with general $p$,
namely, every $p\in(0, \infty)\backslash \{1\}$,
we prove that the \emph{heavy-hitter dimension} characterizes the space complexity of universal streaming in the \emph{entry-wise} updates model,
where each coordinate of the vector is updated at most once.
In the more general model of insertion-only updates,
it is remains open whether subset-$\ell_p$-norms for $p\neq 0,1$
admits uniform streaming with space complexity $\wt{O}(\HHdim(\cS))$. 
For example, the major obstacle for subset-$\ell_2$-norms
is how to maintain the distinct $\ell_2$-heavy-hitters
for every subset of coordinates $S\in\cS$.
We leave this problem for future investigations.


{\small
\bibliographystyle{alphaurlinit}
\bibliography{ref}
}


\end{document}